\makeatletter
\def\input@path{{template/}}
\makeatother

\documentclass[runningheads]{llncs}

\newif\ifpublish
\newif\ifhydrozoan
\newif\ifextended 
\hydrozoanfalse
\extendedtrue
\publishtrue

\usepackage{silence}
\usepackage{type1cm}   
\usepackage{microtype}
\usepackage{amsmath}
\usepackage{amssymb}
\usepackage{algorithm}
\usepackage[noend]{algpseudocode}
\usepackage{graphicx}
\usepackage{tikz}
\usetikzlibrary{arrows.meta}
\usepackage{xcolor}
\usepackage{colortbl}
\usepackage{xspace}
\usepackage{float}
\usepackage{etoolbox}
\usepackage{hyperref}
\usepackage{cleveref}
\usepackage{enumitem}

\setlist{nosep, leftmargin=10pt, labelsep=5pt}

\hypersetup{
    colorlinks=true,
    linkcolor={magenta!80!black},
    citecolor={green!55!black},
    urlcolor={black!55},
    pdfborder={0 0 0},
}
\crefname{section}{Section}{Sections}
\Crefname{section}{Section}{Sections}
\crefname{subsection}{Section}{Sections}
\Crefname{subsection}{Section}{Sections}
\crefname{appendix}{Appendix}{Appendices}
\Crefname{appendix}{Appendix}{Appendices}
\crefname{figure}{Figure}{Figures}
\Crefname{figure}{Figure}{Figures}
\crefname{table}{Table}{Tables}
\Crefname{table}{Table}{Tables}
\crefname{equation}{Equation}{Equations}
\Crefname{equation}{Equation}{Equations}
\crefname{algorithm}{Algorithm}{Algorithms}
\Crefname{algorithm}{Algorithm}{Algorithms}
\crefname{theorem}{Theorem}{Theorems}
\Crefname{theorem}{Theorem}{Theorems}
\crefname{lemma}{Lemma}{Lemmas}
\Crefname{lemma}{Lemma}{Lemmas}
\crefname{proposition}{Proposition}{Propositions}
\Crefname{proposition}{Proposition}{Propositions}
\crefname{corollary}{Corollary}{Corollaries}
\Crefname{corollary}{Corollary}{Corollaries}
\crefname{definition}{Definition}{Definitions}
\Crefname{definition}{Definition}{Definitions}
\crefname{remark}{Remark}{Remarks}
\Crefname{remark}{Remark}{Remarks}
\crefname{observation}{Observation}{Observations}
\Crefname{observation}{Observation}{Observations}
\crefname{myclaim}{Claim}{Claims}
\Crefname{myclaim}{Claim}{Claims}
\crefname{fact}{Fact}{Facts}
\Crefname{fact}{Fact}{Facts}
\makeatletter
\crefname{evalclaim}{claim}{claims}
\Crefname{evalclaim}{Claim}{Claims}
\crefname{ALG@line}{line}{lines}
\Crefname{ALG@line}{Line}{Lines}
\makeatother

\spnewtheorem{observation}{Observation}{\bfseries}{\itshape}
\spnewtheorem{myclaim}{Claim}{\bfseries}{\itshape}
\spnewtheorem{fact}{Fact}{\bfseries}{\itshape}
\newenvironment{proofsketch}{\par\noindent\emph{Proof sketch.}\space}{\hfill\qed\par\medskip}
\newenvironment{restatement}[2]{\par\medskip\noindent\textbf{\Cref{#1} (#2, restated).}\space\itshape}{\par\medskip}
\newenvironment{appstatement}[3]{\ifextended\begin{restatement}{#1}{#3}\def\appstatementend{\end{restatement}}\else\csname #2\endcsname\label[#2]{#1}\def\appstatementend{\csname end#2\endcsname}\fi}{\appstatementend}

\newcommand{\sysname}{{\upshape\textsf{Barnacle}}\xspace}
\newcommand{\fixed}[1]{\textsc{fixed}-#1\xspace}  
\newcommand{\codelink}{
    \ifpublish
        \url{https://github.com/zenodea/hammerheadv2/tree/feat/leader-scheduler} (commit \texttt{bb99146})
    \else
        \url{https://anonymous.4open.science/r/barnacle}
    \fi
}
\newcommand{\leanlink}{
    \ifpublish
        \url{https://github.com/gdanezis/lean-dag} (commit \texttt{24d727f})
    \else
        \url{https://anonymous.4open.science/r/lean-dag}
    \fi
}

\newcommand{\para}[1]{\par\medskip\noindent\textbf{#1.}~}
\newcommand{\ie}{i.e.\@\xspace}
\newcommand{\eg}{e.g.\@\xspace}
\newcounter{evalclaim}
\renewcommand{\theevalclaim}{\textbf{C\arabic{evalclaim}}}
\newcommand{\claimitem}[1]{\item\refstepcounter{evalclaim}\label{#1}\theevalclaim:}

\makeatletter
\patchcmd{\@makecaption}{\small}{\scriptsize}{}{\ClassError{macros}{caption patch failed}{}}
\makeatother
\makeatletter
\patchcmd{\table}{\setlength\belowcaptionskip{10\p@}}{\setlength\belowcaptionskip{4\p@}}{}{\ClassError{macros}{table caption patch failed}{}}
\makeatother

\newcommand{\algsize}{\fontsize{6.5}{7.5}\selectfont}
\newcommand{\appendixalgsize}{\fontsize{6}{7}\selectfont}
\AtBeginEnvironment{algorithm}{\algsize}
\newcommand{\algcommentsize}{\fontsize{5.5}{6.5}\selectfont}
\algrenewcommand{\algorithmiccomment}[1]{\hfill{\algcommentsize\color{black!60}$\triangleright$ #1}}
\algrenewcommand{\alglinenumber}[1]{{\color{black!60}#1:}}
\floatstyle{boxed}
\restylefloat{algorithm}
\makeatletter
\renewcommand\floatc@plain[2]{\setbox\@tempboxa\hbox{\scriptsize{\@fs@cfont #1:} #2}%
    \ifdim\wd\@tempboxa>\hsize {\scriptsize{\@fs@cfont #1:} #2\par}%
    \else\hbox to\hsize{\hfil\box\@tempboxa\hfil}\fi}
\makeatother

\newcommand{\algvar}[1]{\texttt{#1}\xspace}   
\newcommand{\algfunc}[1]{\textsc{#1}\xspace}  
\newcommand{\pleaderoffset}{\algvar{leaderOffset}}
\newcommand{\pwaveoffset}{\algvar{waveOffset}}
\newcommand{\pwavelength}{\algvar{waveLength}}
\newcommand{\pproposersperround}{\algvar{leadersPerRound}}
\newcommand{\pmaxproposersperround}{\algvar{maxLeaders}}
\newcommand{\pinterval}{\algvar{Interval}}
\newcommand{\pcommitthreshold}{\algvar{commitThreshold}}
\newcommand{\pskipthreshold}{\algvar{skipThreshold}}
\newcommand{\plinksize}{\algvar{linkSize}}
\newcommand{\pbackoff}{\algvar{backoff}}
\newcommand{\plastround}{\algvar{lastRound}}
\newcommand{\pthreshold}{\algvar{threshold}}
\newcommand{\ptrycommit}{\algfunc{TryCommit}}
\newcommand{\ptrydecide}{\algfunc{TryDecide}}
\newcommand{\pupdateleaders}{\algfunc{UpdateLeaders}}
\newcommand{\pgetsubdag}{\algfunc{GetSubDag}}
\newcommand{\pcountdirectcommits}{\algfunc{CountDirectCommits}}
\newcommand{\pexpectedcommits}{\algfunc{ExpectedCommits}}
\newcommand{\pdecider}{\algfunc{Decider}}
\newcommand{\pwavenumber}{\algfunc{WaveNumber}}
\newcommand{\pproposeround}{\algfunc{ProposeRound}}
\newcommand{\pcertifyround}{\algfunc{CertifyRound}}
\newcommand{\pvoteround}{\algfunc{VoteRound}}
\newcommand{\pleaderblock}{\algfunc{LeaderBlock}}
\newcommand{\pskippedleader}{\algfunc{SkippedLeader}}
\newcommand{\psupportedleader}{\algfunc{SupportedLeader}}
\newcommand{\ptrydirectdecide}{\algfunc{TryDirectDecide}}
\newcommand{\ptryindirectdecide}{\algfunc{TryIndirectDecide}}
\newcommand{\pgetleader}{\algfunc{GetLeader}}
\newcommand{\piscert}{\algfunc{IsCert}}
\newcommand{\piscertifiedlink}{\algfunc{IsCertifiedLink}}
\newcommand{\pisvote}{\algfunc{IsVote}}
\newcommand{\pislink}{\algfunc{IsLink}}
\newcommand{\pgetdecisionblocks}{\algfunc{GetDecisionBlocks}}
\newcommand{\pundecided}{\bot}
\newcommand{\pdag}{\text{DAG}}
\newcommand{\pgst}{\text{GST}}

\title{\sysname: Adaptive Multi-Leader Scheduling for DAG-Based Consensus}

\ifpublish
    \author{
        George Danezis\inst{3,1} \and
        Zeno de Angeli\inst{1} \and
        Alexandru Ianov Vitanov\inst{2}\thanks{Work done while at University College London.} \and
        Philipp Jovanovic\inst{3,1} \and
        Lefteris Kokoris-Kogias\inst{3} \and
        Alberto Sonnino\inst{3,1} \and
        Pasindu Tennage\inst{4} \and
        Igor Zablotchi\inst{3}
    }
    \institute{Mysten Labs \and University College London (UCL) \and Columbia University \and Digital Asset}
    \authorrunning{Z. de Angeli et al.}
\else
    \author{}
    \institute{}
\fi

\begin{document}

\maketitle
\ifpublish\else
    \pagestyle{plain}
    \thispagestyle{plain}
\fi

\begin{abstract}
    In DAG-based consensus, all validators propose blocks concurrently, and designated leader blocks drive transaction commit. Having multiple leader slots per round cuts queuing latency, yet production deployments run a single leader because of head-of-line blocking: a slow leader stalls the pipeline for at least one leader timeout, and for several waves when its slot must wait for the fallback indirect decision rule. This risk grows with the leader count. We introduce \sysname, an add-on that adapts the leader count at run time. Every interval, it measures on the agreed committed DAG the fraction of slots decided as commit by the direct rule, and drives the leader count with additive increase, multiplicative decrease. The measurement requires no extra messages and no cryptography, and is deterministic. \sysname is generic over DAG protocols; we instantiate it on four protocols spanning the Byzantine ($3f+1$, $5f+1$), crash-only ($2c+1$), and mixed ($5f+3c+1$) fault models, with proven safety and liveness. Results show \sysname matches the best static leader count in every regime: in a healthy network its latency is 6--13\% lower than a single leader's, and under degradation it matches a single leader while remaining 35--56\% below a static high count. We are currently collaborating with the Sui team to integrate \sysname into the Sui blockchain.
\end{abstract}

\section{Introduction}
\label{sec:introduction}

The demand for high-throughput, low-latency blockchains has driven a surge of research on DAG-based Byzantine fault-tolerant consensus~\cite{dag-rider,narwhal,bullshark,shoal++,mysticeti,sailfish++}, which now runs in several production networks~\cite{sui,iota}. In a DAG-based consensus protocol, every validator proposes a block in every round, referencing previous-round blocks to build a directed acyclic graph. Rather than relying on a single proposer per round to sequence transactions, the protocol deterministically designates certain blocks as \emph{leaders}; committing a leader block commits all transactions in its causal history (the DAG vertices reachable by backwards references). For operators of these deployments, end-to-end latency is the primary metric to optimize. Recent protocols explore multi-leader configurations to cut latency further: Shoal~\cite{shoal}, Shoal++~\cite{shoal++}, Mysticeti~\cite{mysticeti}, Sailfish~\cite{sailfish}, Sailfish++~\cite{sailfish++}, Blue Bottle~\cite{blue-bottle}, Orcaella~\cite{orcaella}, and Nemo-Nemo~\cite{nemonemo} all support electing several leaders per round.

The appeal of multi-leaders stems from the need to reduce queuing latency, the delay between a validator receiving a transaction and its inclusion in a leader block. In a DAG, transactions proposed in non-leader blocks commit only once a subsequent leader block references them. The effect is concrete. In Sui's deployment of Mysticeti~\cite{mysticeti}, validators propose a block roughly every 75\,ms and the end-to-end latency is about 400\,ms. A transaction that reaches the validator leading the next round waits, in expectation, half a block interval (about 35\,ms) before it is included in a leader block; a transaction that reaches any non-leading validator additionally waits a full round (75\,ms) for its block to be referenced by the next leader, a roughly 20\% increase in end-to-end latency. With $L$ leaders among $n$ validators in a round, a transaction lands on a leader with probability $L/n$, so the expected queuing latency shrinks as $L$ grows.

Despite this theoretical advantage, to our knowledge, production deployments currently run a single leader per round~\cite{sui-code}. Even protocols that introduce the multi-leader feature remain cautious; the original Mysticeti paper, for instance, recommends deploying with only two leaders or a small constant~\cite{mysticeti}.

The reason for this conservative operational choice is head-of-line blocking, which pits two key latency factors against each other: every additional leader slot cuts queuing latency, but also adds another leader that can stall the pipeline. The mechanics are as follows. Each round consists of one or more designated leader \emph{slots} (positions for leader blocks). Commits proceed in slot order, and a slot is decided across a \emph{wave} (a fixed number of consecutive rounds where references and votes accumulate). Validators advance a round after assembling a quorum of its blocks, but wait up to a fixed leader timeout for the round's leader blocks before proposing without them. A slow leader therefore blocks at the head of the pipeline, with a cost that depends on how far its block traveled in time. In round-level blocking, the leader block reaches nobody in time; every validator waits out the shared leader timeout, and the slot is then decided as skip by the fast, optimistic \emph{direct decision rule}. In commit-level blocking, the leader block reaches only part of the committee; neither the commit nor the skip quorum exists, so the slot stays undecided, and because leaders commit in slot order, every later slot waits until the fallback \emph{indirect decision rule} resolves it, at least a wave later. Either way, the more slots a round has, the more rounds contain a slow leader and the more blocking occurs; and either way the slot is not decided as commit by the direct rule. Any static choice of leader count is wrong in some regime: a high count degrades performance during network instability, while a low count forgoes the latency benefit in healthy conditions (\Cref{fig:hol}).
\begin{figure}[t]
    \centering
\tikzset{
    blk/.style={circle, draw=black!55, fill=white, minimum size=0.4cm, inner sep=0pt, line width=0.4pt},
    commit/.style={blk, fill=green!25, draw=green!55!black, line width=0.7pt},
    stall/.style={blk, fill=red!25, draw=red!70!black, line width=0.7pt},
    blocked/.style={blk, fill=orange!35, draw=orange!85!black, line width=0.7pt},
    ref/.style={-{Latex[length=1mm, width=0.8mm]}, black!30, line width=0.3pt},
    vote/.style={-{Latex[length=1.2mm, width=1mm]}, green!55!black, line width=0.8pt},
    lbl/.style={font=\scriptsize, text=black!70},
}
\newcommand{\holdag}[4]{
    \foreach \c/\rl in {0/$r$, 1/$r{+}1$, 2/$r{+}2$} {
        \node[lbl] at (\c, 0.75) {\rl};
    }
    \foreach \a in {0,1,2,3} {
        \node[lbl] at (-0.75, -\a) {$A_\a$};
        \foreach \c in {1,2} { \node[blk] (a\a\c) at (\c, -\a) {}; }
    }
    \node[#1] (a00) at (0, 0) {};
    \node[#2] (a10) at (0, -1) {};
    \node[#3] (a20) at (0, -2) {};
    \node[blk] (a30) at (0, -3) {};
    \foreach \s/\t in {0/0, 0/2, 0/3, 1/0, 1/1, 1/2, 2/0, 2/2, 2/3, 3/1, 3/2, 3/3} {
        \draw[ref] (a\s1) -- (a\t0);
    }
    \foreach \s/\t in {0/0, 0/1, 0/2, 1/0, 1/1, 1/2, 2/0, 2/1, 2/2, 3/1, 3/2, 3/3} {
        \draw[ref] (a\s2) -- (a\t1);
    }
    #4
}
\begin{tikzpicture}[x=1.4cm, y=0.6cm]
    \begin{scope}
        \holdag{commit}{blk}{blk}{
            \foreach \v in {0,1,2} { \draw[vote] (a\v1) -- (a00); }
            \foreach \c in {0,1,2} { \foreach \v in {0,1,2} { \draw[vote] (a\c2) -- (a\v1); } }
        }
        \node[lbl, anchor=north] at (1, -3.6) {(a) one leader slot: $A_0$ is decided by the direct rule};
    \end{scope}
    \begin{scope}[xshift=6.0cm]
        \holdag{commit}{stall}{blocked}{
            \foreach \v in {1,3} { \draw[vote] (a\v1) -- (a10); }
        }
        \node[lbl, anchor=north] at (1, -3.6) {(b) three leader slots: $A_2$ waits for $A_1$};
    \end{scope}
    \begin{scope}[x=1cm, y=1cm, yshift=-3.3cm, xshift=0.3cm]
        \node[commit]  (l1) at (0, 0) {};   \node[lbl, anchor=west, xshift=1pt] at (l1.east) {decided by the direct rule};
        \node[stall]   (l2) at (4.4, 0) {}; \node[lbl, anchor=west, xshift=1pt] at (l2.east) {undecided};
        \node[blocked] (l3) at (6.5, 0) {}; \node[lbl, anchor=west, xshift=1pt] at (l3.east) {blocked};
    \end{scope}
\end{tikzpicture}
    \caption{
        Head-of-line blocking in a DAG. Each column is a round; each row is a validator; nodes are proposed blocks and arrows are references (votes). The same DAG ($n{=}4$; validator $A_1$ has poor network conditions so its round-$r$ block promptly reaches only part of the committee: the commit-level case) is interpreted with one leader slot per round (a) and with three (b). In (a), leader $A_0$'s block gathers enough references (green) and is decided by the direct decision rule at the end of the wave ($r{+}2$). In (b), leader $A_1$'s block gathers too few references for the direct rule and will be decided later by the indirect decision rule; leader $A_2$'s block gathers enough references but, because leaders commit in round-then-slot order, it waits until $A_1$'s slot is decided.
    }
    \label{fig:hol}
\end{figure}

To resolve this tension, we introduce \sysname, a mechanism to dynamically adapt the number of leaders per round to observed network conditions. It re-evaluates the leader count every \pinterval rounds (typically a few seconds).

Measuring network health in a BFT system is itself delicate: validators exchanging latency reports or voting on perceived health would add protocol messages and rest on unverifiable local observations, leaving honest validators without a common value to act on. The fundamental insight behind \sysname is that a DAG-based protocol already exposes the information required to measure network health. In a DAG, every validator proposes a block in every round, so a committed window of the DAG contains every leader block and every vote of that period, and this window is agreed upon by all honest validators. DAG protocols decide each leader with two rules: a fast rule for ideal conditions (the direct decision rule) and a slower fallback that resolves a leader by recursing through later leaders (the indirect decision rule). The recently committed DAG therefore lets each validator count how many leaders of the recent past were decided under ideal conditions, \ie by the direct rule, at no cost. \sysname uses this count as an indicator of network health for the near future, on the assumption that conditions persist for the next few seconds. The measurement requires no extra messages and no cryptography; it is deterministic and computed on agreed state, so every honest validator obtains the same value.

We design \sysname as a generic add-on over a minimal interface exposed by the DAG protocol. The control loop requires only the leader schedule, the committed prefix, and the base protocol's direct-commit predicate. Because this interface is abstract, \sysname is agnostic to the fault model. We instantiate the interface on four base protocols: Mysticeti ($n=3f+1$, Byzantine), Blue Bottle ($n=5f+1$, Byzantine with a two-round commit), Nemo-Nemo ($n=2c+1$, crash-fault tolerant), and Orcaella ($n=5f+3c+1$, mixed: $f$ Byzantine plus $c$ crashed). The mechanism never inspects quorum sizes. While we focus on uncertified DAGs, the same measurement applies to certified DAGs (\Cref{sec:protocol}).

\para{Contributions} We make the following contributions:
\begin{itemize}
    \item We introduce the first mechanism that adapts the leader count of a DAG-based consensus protocol at run time, namely deciding how many validators lead where prior dynamic schedulers~\cite{carousel,hammerhead,shoal} choose \emph{who} leads, with no additional protocol messages or cryptographic tools (\Cref{sec:overview,sec:protocol}).
    \item We prove that safety and liveness hold for any deterministic update rule, by hand (\Cref{sec:proofs}) and machine-checked in Lean~4 (\Cref{app:lean}); the formulation is a generic add-on over an abstract base protocol under explicit assumptions.
    \item We show that one implementation runs unchanged across the $3f+1$, $5f+1$, crash ($2c+1$), and mixed ($5f+3c+1$) fault models, instantiated on Mysticeti, Blue Bottle, Nemo-Nemo, and Orcaella (\Cref{sec:implementation}).
    \item We show that \sysname matches the best static leader count in every regime (\Cref{sec:evaluation}): in a healthy network its latency is $6$--$13\%$ lower than with a single leader per round, and under degradation it matches a single leader while remaining $35$--$56\%$ below a static high count.
\end{itemize}

Furthermore, we are collaborating with the Sui team to integrate \sysname into the Sui blockchain.
\section{Background and Model}\label{sec:preliminaries}

This section fixes the model and the requirements \sysname must meet (\Cref{sec:threat_model}), then recalls the structure of DAG-based consensus protocols that \sysname builds on (\Cref{sec:dag_consensus}).

\subsection{System model and problem}
\label{sec:threat_model}

\para{Fault model}
We consider a system of $n$ validators where a static, computationally bounded adversary controls a subset of them: up to $f$ validators that deviate arbitrarily from the protocol (Byzantine) and up to $c$ that may only halt (crashed).  We consider three fault models; which one applies is fixed by the base protocol \sysname runs on: the Byzantine model admits only Byzantine faults (Mysticeti~\cite{mysticeti} with $n=3f+1$, Blue Bottle~\cite{blue-bottle} with $n=5f+1$); the crash model admits only crashes (Nemo-Nemo~\cite{nemonemo} with $n=2c+1$); and the mixed model admits both (Orcaella~\cite{orcaella} with $n=5f+3c+1$). In all models the remaining validators are honest and follow the protocol. \sysname never inspects $n$, $f$, $c$, or which model applies; it inherits them from the base protocol.

\para{Network}
Validators communicate over a partially synchronous network~\cite{dls88}. Before a Global Stabilization Time ($\pgst$), the adversary controls message timing and delivery, though messages between honest validators are guaranteed to eventually arrive. After $\pgst$, the network becomes synchronous: any message sent at time $x$ is guaranteed to arrive by time $\max\{x, \pgst\} + \Delta$, for a known bound $\Delta$. In this model, after $\pgst$, a base protocol whose leader timeout is at least $\Delta$ decides every honest leader's slot by the direct decision rule (detailed in \Cref{sec:dag_consensus}); only slots led by crashed or Byzantine validators can remain undecided, at most $f{+}c$ per round. This is the model in which the base protocols prove Safety and Liveness (\Cref{def:bab}), and in which we prove that \sysname preserves them.

\para{Deployments}
In practice, validators advance to the next round upon receiving a quorum of blocks, and wait for the round's leader block(s) only for a short, fixed time after that quorum: 200\,ms in Sui's Mysticeti~\cite{sui-code}. Waiting for a pessimistic $\Delta$ every round would result in high latency, so the timeout is tuned to the common case. An honest leader whose block arrives after the timeout, because of congestion or temporary network delays is referenced by too few blocks and its slot is not decided as commit by the direct rule (\Cref{sec:dag_consensus}), as if the leader were faulty. Hence the symptom \sysname acts on, a slot not decided as commit by the direct rule, arises from faulty leaders in the model but from temporary network degradations in deployments; \sysname only ever observes the symptom.

\para{Problem}
\sysname enhances a base protocol that solves Atomic Broadcast.

\begin{definition}[Atomic Broadcast]\label{def:bab}
    Atomic Broadcast satisfies the following properties:
    \begin{itemize}
        \item \textbf{Agreement:} If one honest validator commits a block, all honest validators eventually commit it.
        \item \textbf{Integrity:} Each honest validator commits a block at most once, and only if it was proposed by its author (a requirement that is trivial under crash faults).
        \item \textbf{Validity:} If an honest validator proposes a block, all honest validators eventually commit it.
        \item \textbf{Total Order:} If an honest validator commits a block $b$ before block $b'$, no honest validator commits $b'$ before $b$.
    \end{itemize}
\end{definition}

\sysname must meet two requirements. (R1) Correctness: the base protocol composed with \sysname still satisfies every property of the definition. (R2) Performance, stated informally: (a) in healthy conditions \sysname introduces no latency penalty over the best static leader count; (b) when slots stall, its latency tracks the best static leader count for the prevailing conditions, and it recovers when conditions do. R1 is established formally, by hand-written proofs (\Cref{sec:proofs}) and a Lean formalization (\Cref{app:lean}); R2 is a performance goal outside the formal model and is validated empirically in \Cref{sec:evaluation}.

\subsection{DAG-based consensus}
\label{sec:dag_consensus}

DAG-based protocols operate in rounds. Every validator proposes one block per round, referencing a quorum of previous-round blocks. Validators build a local directed acyclic graph (DAG) from delivered blocks; a block is added only once its whole causal history is present.

\para{Leader slots and decision rules}
Each round has one or more slots in a fixed order. The leader of slot $i$ of round $r$ is a deterministic, public function of $(r, i)$. The leader's round-$r$ block is the leader block of the slot. Every validator proposes every round regardless of how many slots there are; a non-leader's block is an ordinary block.

A wave is a fixed number of consecutive rounds, two or three in the protocols we consider (comprising propose, vote, and optionally certify rounds). The \emph{direct decision rule} decides a slot from the references inside its wave alone: a slot is decided as commit if enough validators reference the leader block in the pattern the protocol prescribes, and as skip if enough validators fail to. Otherwise, it remains undecided, and the \emph{indirect decision rule} later takes over. This rule first deterministically designates a future leader as \emph{anchor}; then decides a slot as commit if the anchor's causal history links to the leader block through enough certified paths, and as skip otherwise. Decided leaders are then ordered in round-then-slot order and their causal histories are linearized deterministically~\cite{dag-rider} into a committed prefix.

\para{Certified and uncertified DAGs}
We describe \sysname over uncertified DAGs, which use best-effort broadcast without explicit certificates: references serve as votes, and a quorum of matching votes in the next round plays the role of a certificate.
\ifextended
    We instantiate \sysname on four base protocols chosen to span fault models: Mysticeti~\cite{mysticeti} ($n=3f+1$, Byzantine, three-round waves), Blue Bottle~\cite{blue-bottle} ($n=5f+1$, Byzantine, two-round waves), Nemo-Nemo~\cite{nemonemo} ($n=2c+1$, crash faults only, two-round waves), and Orcaella~\cite{orcaella} ($n=5f+3c+1$, mixed: $f$ Byzantine plus $c$ crashed, two-round waves). \Cref{tab:base-protocols} summarizes the parameters of their decision rules.
\else{}
    \Cref{tab:base-protocols} summarizes the parameters of the decision rules of the four base protocols we instantiate \sysname on.
\fi
Nothing in \sysname depends on these values: quorum sizes live exclusively inside the base protocol's decision rules (\Cref{sec:protocol}).
In certified DAGs~\cite{narwhal,bullshark,shoal}, certificates are explicit objects and their direct rule counts references to the leader in the next round; the same measurement applies because every validator's certified block still exists in every round, though this is not evaluated here.

\begin{table}[t]
    \centering
    \footnotesize
    \caption{Base protocols and the parameters of their decision rules: wave length; commit and skip thresholds, the numbers of validators whose references let the direct rule decide a slot as commit or as skip; and link size, the number of certified paths the indirect rule requires.}
    \label{tab:base-protocols}
    \begin{tabular*}{\textwidth}{@{\extracolsep{\fill}}llcccc@{}}
        \hline
        Protocol                       & Fault model             & Wave length & Commit  & Skip                           & Links    \\
        \hline
        Mysticeti~\cite{mysticeti}     & $n{=}3f{+}1$, Byzantine & 3           & $n{-}f$ & $n{-}f$                        & $1$      \\
        Blue Bottle~\cite{blue-bottle} & $n{=}5f{+}1$, Byzantine & 2           & $n{-}f$ & $n{-}f$                        & $n{-}3f$ \\
        Nemo-Nemo~\cite{nemonemo}      & $n{=}2c{+}1$, crash     & 2           & $n{-}c$ & $n$\textsuperscript{$\dagger$} & $1$      \\
        Orcaella~\cite{orcaella}       & $n{=}5f{+}3c{+}1$, mixed & 2          & $n{-}f{-}c$ & $n{-}f{-}c$               & $n{-}3f{-}2c$ \\
        \hline
    \end{tabular*}
    \par\smallskip
    \parbox{\linewidth}{\raggedright $^\dagger$Nemo-Nemo's direct rule never decides a skip; a threshold of $n$ can only trigger when every validator's block is present, so in practice skips are decided by the indirect rule.}
\end{table}

\section{Overview}\label{sec:overview}

\sysname is a control loop on the leader count. The loop runs once every \pinterval rounds (\eg $50$, a few seconds). Its trigger is a \emph{pivot}: the first leader a validator commits more than \pinterval rounds after the previous pivot. On committing a pivot, the validator takes the pivot's causal history over the last \pinterval rounds as the window, counts the slots that the direct decision rule decided as commit, and divides by the number expected under the current leader count: the \emph{direct decision rate}. If the rate is at least \pthreshold (\eg $0.96$) the leader count grows by one, up to \pmaxproposersperround (\eg $5$); otherwise it shrinks by $1, 2, 4, \dots$ on consecutive unhealthy windows, never below one: additive increase, multiplicative decrease, as in TCP congestion control. The new count applies to the rounds after the pivot, with no transition phase. \Cref{alg:abstract} states the rule.

\begin{algorithm}[t]
    \caption{\sysname, abstract form (runs at every party on top of the base protocol)}
    \label{alg:abstract}
    \begin{algorithmic}[1]
        \State $\pinterval$, $\pmaxproposersperround$, $\pthreshold$ \Comment{E.g., $50$ rounds, $5$ leaders, $0.96$}
        \State $\pproposersperround \gets 1$, $\pbackoff \gets 0$, $\plastround \gets 0$ \Comment{State}
        \Statex

        \Procedure{OnCommit}{$b_{leader}$} \Comment{Called after committing $b_{leader}$}
        \State $r \gets b_{leader}.round$
        \If{$r \leq \plastround + \pinterval$} \Return
        \EndIf
        \State $W \gets$ causal history of $b_{leader}$ in rounds $[r - \pinterval, r]$
        \State $observed \gets$ slots of $W$ decided as commit by the direct rule
        \State $expected \gets (\pinterval - \pwavelength + 1) \times \pproposersperround$
        \If{$observed \geq \pthreshold \times expected$} \Comment{Healthy: additive increase}
        \State $\pproposersperround \gets \min(\pproposersperround + 1, \pmaxproposersperround)$
        \State $\pbackoff \gets 0$
        \Else \Comment{Stalled slots: multiplicative decrease}
        \State $\pproposersperround \gets \max(\pproposersperround - 2^{\pbackoff}, 1)$
        \State $\pbackoff \gets \pbackoff + 1$
        \EndIf
        \State $\plastround \gets r$ \Comment{The new count applies to rounds $> r$}
        \EndProcedure
    \end{algorithmic}
\end{algorithm}

\para{Why it works}
Every validator proposes a block in every round, so the window contains every leader block and reference needed. Because the window is the pivot's causal history, all honest validators that commit the pivot hold the same window, compute the same leader count, and agree on every configuration change (\Cref{sec:proofs}). The leader schedule is an interpretation of the DAG, not part of it, so the base protocol's safety applies within each configuration and the only delicate point is the switch, which pinning changes to committed pivot rounds handles. The test is one integer comparison (\eg $100 \times \mathit{observed} \geq 96 \times \mathit{expected}$), with no messages and no cryptography. This agreement argument nowhere uses the shape of the AIMD rule: safety and liveness hold for any deterministic update rule computed on the window (\Cref{sec:proofs}); AIMD is our choice, justified in \Cref{sec:update}.

\para{Why the direct decision rate}
A slot not decided as commit by the direct rule is exactly one that caused head-of-line blocking: either it timed out into a skip, or it stays undecided, waits at least a wave for the indirect decision rule, and holds every later slot with it. The direct decision rate therefore measures blocking in the window, whichever form it took. Measuring under the current leader count only keeps the rule to one comparison per interval, and the multiplicative decrease makes it react within a few intervals when the direct decision rate collapses.

\para{Base protocol interface}
\sysname is defined against any DAG-based consensus protocol that provides the following:
\begin{itemize}
    \item \emph{(A1) Rounds and slots.} The execution is structured in sequential rounds $r = 1, 2, \dots$; in round $r$, every validator proposes one block; slot $i$ of round $r$, for $i$ below the current leader count, is led by a deterministic function of $(r, i)$ and the current configuration.
    \item \emph{(A2) Causal completeness.} A validator adds a newly received block to its local DAG only once the block's entire causal history is available (this claim is formalized in \Cref{sec:proofs}).
    \item \emph{(A3) Direct decision predicate.} There exists a predicate that, given a DAG, decides a slot $(r,i)$ using only blocks of rounds $[r, r + \pwavelength)$, representing the base protocol's direct decision rule alongside the indirect decision rule's certified-link test.  All quorum sizes, and hence the fault model ($n=3f{+}1$ or $5f{+}1$ under Byzantine faults, $2c{+}1$ under crash faults, $5f{+}3c{+}1$ under mixed faults), reside within this predicate; \sysname never inspects them.
    \item \emph{(A4) Base safety and liveness.} For any fixed configuration, honest validators commit the exact same sequence of leaders (safety), and after Global Stabilization Time, they commit new leaders infinitely often (liveness). For the leader rotation used in this paper, the liveness half is proved rather than assumed (\Cref{app:lean}).
\end{itemize}

\section{The \sysname Protocol}\label{sec:protocol}

\Cref{sec:integration} shows how \sysname plugs into the base protocol so that no round ever mixes two leader counts (\Cref{alg:main}) and how the base protocols instantiate the interface; \Cref{sec:update} shows how the window is measured and the count updated so that the result is identical at every honest validator (\Cref{alg:dual_mode}). The base protocol's own decision rules (\Cref{alg:decider,alg:DAG_helper}) are in \Cref{app:base-protocol}.

\subsection{Integration with the base protocol}
\label{sec:integration}

A \emph{configuration} is a leader count together with the round at which it takes effect; the configuration in force at round $r$ is the one with the largest start round at most $r$, and $\ell(r)$ denotes its leader count (\Cref{alg:main}). A validator maintains its current leader count in the variable \pproposersperround.

\begin{algorithm}[t]
    \caption{Protocol Main Function}
    \label{alg:main}
    \begin{algorithmic}[1]
        \State $\pwavelength$ \Comment{Rounds per wave, \Cref{tab:base-protocols}}
        \State $\pproposersperround$ \Comment{Current leader count; $\ell(r)$ is the count in force at round $r$ (\Cref{sec:integration})}
        \State $\pinterval$ \Comment{Rounds between updates, e.g., $50$}
        \State $\plastround \gets 0$ \Comment{Round of the last update}
        \Statex

        \Procedure{\ptrycommit}{$s_{committed}, r_{highest}$} \Comment{$s_{committed}$: the last committed slot}
        \State $S \gets$ \Call{\ptrydecide}{$s_{committed}, r_{highest}$}
        \State $S_{commit} \gets [\;]$ \Comment{Holds committed leaders}
        \For{$s \in S$}
        \If{$s = \pundecided$} \textbf{break}
        \EndIf
        \If{$s = \texttt{Commit}(b_{leader})$}
        \State $S_{commit} \gets S_{commit} \parallel b_{leader}$
        \If{$b_{leader}.round > \plastround + \pinterval$}
        \State $\pproposersperround \gets$ \Call{\pupdateleaders}{$b_{leader}$} \Comment{For the next round}
        \State \Return $S_{commit}$ \Comment{The pivot round's remaining slots keep the old count}
        \EndIf
        \EndIf
        \EndFor
        \State \Return $S_{commit}$
        \EndProcedure
        \Statex

        \Procedure{\ptrydecide}{$s_{committed}, r_{highest}$}
        \State $S \gets [\;]$ \Comment{Holds decisions}
        \For{$r \gets r_{highest}$ \textbf{down to} $s_{committed}.round$}
        \For{$l \gets \ell(r) - 1$ \textbf{down to} $0$}
        \If{$(r, l) \leq (s_{committed}.round, s_{committed}.index)$} \textbf{break}
        \EndIf
        \State $D \gets$ \Call{\pdecider}{$r \bmod \pwavelength,\ l,\ \pdag$} \Comment{Instance of \Cref{alg:decider,alg:DAG_helper}}
        \State $w \gets D.$\Call{\pwavenumber}{$r$}
        \State $s \gets D.$\Call{\ptrydirectdecide}{$w$}
        \If{$s = \pundecided$} $s \gets D.$\Call{\ptryindirectdecide}{$w, S$}
        \EndIf
        \State $S \gets s \parallel S$
        \EndFor
        \EndFor
        \State \Return $S$
        \EndProcedure
    \end{algorithmic}
\end{algorithm}
\begin{algorithm}[t]
    \caption{Update Leader Algorithm}
    \label{alg:dual_mode}
    \begin{algorithmic}[1]
        \State $\pinterval$, $\plastround$ \Comment{Defined in \Cref{alg:main}}
        \State $\pmaxproposersperround$ \Comment{Upper bound on the leader count, e.g., $5$}
        \State $\pthreshold$ \Comment{Direct decision rate considered healthy, e.g., $0.96$}
        \State $\pbackoff \gets 0$ \Comment{Consecutive decreases}
        \Statex

        \Procedure{\pupdateleaders}{$b_{leader}$}
        \State $W \gets$ \Call{\pgetsubdag}{$b_{leader}$}
        \State $observed \gets$ \Call{\pcountdirectcommits}{$W, \pproposersperround$}
        \State $expected \gets$ \Call{\pexpectedcommits}{$\pproposersperround$}
        \If{$observed \geq \pthreshold \times expected$} \Comment{Healthy: additive increase}
        \State $\pproposersperround \gets \min(\pproposersperround + 1, \pmaxproposersperround)$
        \State $\pbackoff \gets 0$
        \Else \Comment{Stalled slots: multiplicative decrease}
        \State $\pproposersperround \gets \max(\pproposersperround - 2^{\pbackoff}, 1)$
        \State $\pbackoff \gets \pbackoff + 1$
        \EndIf
        \State $\plastround \gets b_{leader}.round$
        \State \Return $\pproposersperround$
        \EndProcedure
        \Statex

        \Procedure{\pgetsubdag}{$b_{leader}$}
        \State $r \gets b_{leader}.round$
        \State \Return $\{b \in \pdag[r'] : r - \pinterval \leq r' \leq r \land \Call{\pislink}{b, b_{leader}}\}$
        \EndProcedure
        \Statex

        \Procedure{\pexpectedcommits}{$k$} \Comment{Decidable slots of the window}
        \State \Return $(\pinterval - \pwavelength + 1) \times k$
        \EndProcedure
        \Statex

        \Procedure{\pcountdirectcommits}{$W, k$} \Comment{Slots $\in W$ decided by direct rule, $k$ leaders}
        \State $count \gets 0$
        \For{$r \in \{b.round : b \in W\}$}
        \For{$l \gets 0$ \textbf{to} $k - 1$}
        \State $D \gets$ \Call{\pdecider}{$r \bmod \pwavelength,\ l,\ W$} \Comment{\Cref{alg:decider,alg:DAG_helper} reading $W$}
        \State $w \gets D.$\Call{\pwavenumber}{$r$}
        \For{$b \in D.$\Call{\pleaderblock}{$w$}} \Comment{Leader block(s) of the slot}
        \If{$D.$\Call{\psupportedleader}{$w, b$}} $count \gets count + 1$
        \EndIf
        \EndFor
        \EndFor
        \EndFor
        \State \Return $count$
        \EndProcedure
    \end{algorithmic}
\end{algorithm}

\para{Decision and update procedures}
The \ptrydecide procedure (\Cref{alg:main}) is the base protocol's unchanged mechanism, except that the number of slots it evaluates per round is the leader count of the configuration in force. For each round and each slot, the validator instantiates a \pdecider (\Cref{alg:decider,alg:DAG_helper}, \Cref{app:base-protocol}) with the wave offset, the slot index, and the local DAG \pdag, and applies the direct decision rule followed by the indirect decision rule.

The \ptrycommit procedure (\Cref{alg:main}) walks the decision sequence produced by \ptrydecide in order up to the first undecided slot, appending any committed leaders to the committed prefix. When it commits a leader whose round exceeds $\plastround + \pinterval$, this leader becomes the pivot where the next leader count is decided. The validator calls \pupdateleaders (\Cref{alg:dual_mode}) and returns immediately. The new configuration starts at the round after the pivot's: on the next invocation, the remaining slots of the pivot's round are decided under the old leader count, so every slot of rounds up to and including the pivot's is decided under the configuration in force at its round, and no round ever mixes two counts. Triggering the update when the round is more than \pinterval rounds after the last update, rather than strictly at a multiple of \pinterval, makes the logic robust to the pivot round's slot being skipped.

\para{Base protocols}
\ifextended
The only protocol-specific code is the decider (\Cref{alg:decider}), which is parameterized by \pwavelength, \pcommitthreshold, \pskipthreshold, and \plinksize (\Cref{tab:base-protocols}, \Cref{app:base-protocol}). Mysticeti~\cite{mysticeti} handles Byzantine validators ($n=3f+1$) using three-round waves, thresholds of $n-f$, and one certified link. Blue Bottle~\cite{blue-bottle} handles Byzantine validators ($n=5f+1$) using two-round waves, so the vote block is itself the certificate; it uses thresholds of $n-f = 4f+1$, and $\plinksize = n-3f = 2f+1$ certified links for the indirect decision rule, which is what keeps a commit by the indirect rule consistent with a skip by the direct rule once the vote is the certificate. Nemo-Nemo~\cite{nemonemo} handles crashed validators ($n=2c+1$) with two-round waves, a commit threshold of $c+1$, a skip threshold of $n$ (so the direct rule never decides a skip in practice), and one link. Orcaella~\cite{orcaella} handles $f$ Byzantine plus $c$ crashed validators ($n=5f+3c+1$) with two-round waves, so the vote block is again the certificate; it uses thresholds of $n-f-c = 4f+2c+1$ and $\plinksize = n-3f-2c = 2f+c+1$ certified links, generalizing Blue Bottle's consistency argument to the mixed fault model, and setting $c=0$ recovers Blue Bottle exactly. \sysname's code (\Cref{alg:main,alg:dual_mode}) is identical in all four.
\else
The only protocol-specific code is the decider (\Cref{alg:decider}), parameterized by \pwavelength, \pcommitthreshold, \pskipthreshold, and \plinksize (\Cref{tab:base-protocols}); \Cref{app:base-protocol} details how each base protocol instantiates them. \sysname's code (\Cref{alg:main,alg:dual_mode}) is identical in all four.
\fi

In certified DAGs, such as Narwhal and Bullshark~\cite{narwhal,bullshark}, or Shoal~\cite{shoal}, certificates are explicit objects and the direct decision rule counts references to the leader in the next round. Every validator's certified block still exists in every round, so the window measurement applies unchanged; we do not evaluate this.

\subsection{Updating the leader count}
\label{sec:update}

Upon committing a pivot, \pupdateleaders (\Cref{alg:dual_mode}) proceeds in three steps: it extracts the window from the pivot's causal history, measures how many of the window's slots the base protocol decided as commit by the direct rule, and updates the leader count from the resulting direct decision rate.

\para{Step 1: the window}
The procedure \pgetsubdag (\Cref{alg:dual_mode}) defines the window $W$ as the pivot's causal history restricted to the last \pinterval rounds. Because it is a function of the pivot only, the window is identical at every honest validator that commits it. It contains every block of those rounds that the pivot references; in particular, every leader block and every reference the direct decision rule reads, because every validator proposes a block in every round.

\para{Step 2: the measurement}
For each round of $W$ and each slot below the current leader count, \pcountdirectcommits (\Cref{alg:dual_mode}) instantiates a \pdecider whose DAG is $W$ rather than the local DAG \pdag. It takes the slot's leader block, or blocks, via \pleaderblock (\Cref{alg:DAG_helper}), and counts those for which \psupportedleader (\Cref{alg:decider}) holds, \ie, the slot is decided as commit by the direct rule within $W$. Evaluating on $W$ rather than the local DAG is what keeps the count identical across validators. The \pexpectedcommits procedure (\Cref{alg:dual_mode}) computes the slots of the window that can be decided within it as $(\pinterval - \pwavelength + 1)$ rounds times the leader count. The window spans $\pinterval + 1$ rounds, of which the last \pwavelength are excluded rather than counted as failures: the $\pwavelength - 1$ rounds below the pivot's have their certify round above the pivot, outside $W$, and the pivot's own round contributes a single block to $W$ (the pivot itself, since no other block of that round can be its ancestor), so its slots can never reach the commit threshold within $W$.

\para{Step 3: the update}
The \pupdateleaders procedure calculates the direct decision rate as the ratio of observed to expected decisions. If this rate is at least \pthreshold, the leader count increases by one, capped at \pmaxproposersperround, and \pbackoff is reset to 0. Otherwise, the leader count decreases by $2^{\pbackoff}$, floored at 1, and \pbackoff is incremented. Additive increase probes for more parallelism cautiously, one leader per interval, so a wrong step costs at most one interval of one stalled slot. Multiplicative decrease sheds parallelism within a few intervals when slots stall, because a single stalled slot already blocks every slot behind it. The cost of the one-interval lag before the decrease is evaluated in \Cref{sec:evaluation}. In the implementation, the threshold test is executed as an integer comparison.

\section{Correctness}
\label{sec:proofs}

\ifextended
    This section establishes that \sysname preserves the correctness of the base protocol, relying solely on the interface A1--A4 from \Cref{sec:overview}; nothing below is specific to a base protocol or to the AIMD rule. We provide the statements and proof sketches; \Cref{app:proofs} contains the full proofs, and every statement is machine-checked in Lean (\Cref{app:lean}). We open source our Lean formalization.\footnote{\leanlink}

    \para{Notation}
    A \emph{configuration} $C$ is a pair consisting of a start round and a leader count $\ell(C)$; the configuration in force at round $r$ is the one with the largest start round at most $r$ (\Cref{sec:integration}). Every validator starts in $C_0 = (0, 1)$. A validator \emph{closes} $C_k$ when it commits the pivot of $C_k$, which is the first committed slot whose round exceeds the start round of $C_k$ by more than \pinterval. The pivot's round becomes the start round of $C_{k+1}$. Because a validator's DAG is finite at any given time, its run closes only finitely many configurations. Accordingly, every statement below compares validators based on the configurations that \emph{both} have reached, and liveness asserts that runs reach every height, never that a completed infinite sequence exists. After Global Stabilization Time (GST), messages between honest validators arrive within a known bound $\Delta$.

    \begin{fact}
        \label{fact:dag}
        Let $b$ be a block held by an honest validator. (i)~The validator's DAG contains the entire causal history of $b$. (ii)~Every honest validator eventually holds $b$. (iii)~Any two honest validators that hold $b$ hold the same causal history of $b$.
    \end{fact}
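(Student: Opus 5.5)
Fact~\ref{fact:dag} is essentially a restatement of the interface assumption (A2) together with the network model and the integrity of block references, so the plan is to derive each item directly from these rather than from anything specific to \sysname.

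For item~(i), I would argue by induction on the order in which blocks are inserted into the validator's local DAG. Assumption (A2) says a block is added only once its entire causal history is available. By the induction hypothesis, every parent already in the DAG carries its own full causal history. The causal history of $b$ is $b$ together with the union of its parents' causal histories, so it is contained in the DAG at the moment $b$ is added. Blocks are never removed, so the inclusion persists.

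For item~(ii), I would use the reliable-delivery guarantee of the network: messages between honest validators eventually arrive, and after $\pgst$ they arrive within $\Delta$. Suppose an honest validator holds $b$. I would use the standard synchronizer behaviour of DAG protocols to show that every other honest validator eventually obtains every block of $b$'s causal history:
\begin{itemize}
\item honest validators forward or serve requested blocks;
\item a validator missing a parent fetches it from the sender, who holds it by item~(i).
\end{itemize}
Since the causal history is finite (rounds are bounded below), finitely many fetches complete. Then (A2) lets the receiver insert $b$.

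Item~(iii) is where I expect the only real subtlety: ruling out that two honest validators hold \emph{different} blocks under the same reference.
\begin{itemize}
\item In the Byzantine and mixed models, I would identify each block by a collision-resistant digest, as in the base protocols. Parents are then fixed by $b$'s content, and by induction on the round, both validators hold the identical causal history. A computationally bounded adversary cannot produce two distinct blocks with the same digest.
\item In the crash model, there is no equivocation at all, so references are unambiguous.
\end{itemize}
Combining these with item~(i) gives equality of the two causal histories.
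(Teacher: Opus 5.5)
Your proposal is correct and follows the paper's proof item by item: (i) is induction over insertions via A2; (ii) is the broadcast layer's eventual delivery applied recursively to the causal history; and (iii) rests on references being part of a block's content, so $b$ alone fixes its causal history and, by (i), both validators hold all of it. Your appeal to collision-resistant digests, or to the absence of equivocation under crash faults, only makes explicit the unambiguity of references that the paper's phrase ``a block names its references as part of its content'' takes for granted.
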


    \begin{proofsketch}
        (i) is A2; (ii) is the delivery guarantee of the broadcast layer; (iii) follows by induction on the causal history, using (i) and (ii).
    \end{proofsketch}

    \begin{lemma}
        \label[lemma]{lemma:window}
        Any two honest validators that commit the same pivot compute the same window.
    \end{lemma}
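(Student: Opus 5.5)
The plan is to reduce the statement to Fact~\ref{fact:dag}(iii) by observing that \pgetsubdag is a pure function of the pivot's causal history. By its definition in \Cref{alg:dual_mode}, the window of a pivot $b_{leader}$ of round $r$ is
\[
W(b_{leader}) = \{\, b \in \pdag[r'] : r - \pinterval \le r' \le r \land \pislink(b, b_{leader}) \,\}.
\]
The first step is to show that $\pislink(b, b_{leader})$ holds exactly when $b$ lies in the causal history of $b_{leader}$, meaning $b$ is reachable from $b_{leader}$ by backward references. Hence $W(b_{leader})$ is the causal history of $b_{leader}$ intersected with the rounds $[r-\pinterval, r]$. The second step notes that the round bounds depend only on $b_{leader}.round$, which is a field of the block itself, so both validators use the same interval.

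Now let two honest validators $p$ and $q$ both commit the same pivot $b_{leader}$. Committing requires holding the block, so both hold $b_{leader}$. By Fact~\ref{fact:dag}(i), each validator's local DAG contains the entire causal history of $b_{leader}$. By Fact~\ref{fact:dag}(iii), the two causal histories coincide: the same blocks with the same reference edges. The reachability check $\pislink(\cdot, b_{leader})$ is evaluated over edges inside this shared history, since any path from $b_{leader}$ stays within its causal history. It therefore returns the same answer at $p$ and $q$ for every block, and blocks outside the history are excluded at both. Intersecting with the identical round range gives $W_p = W_q$.

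The main obstacle is the identity ``same pivot'' itself. Under Byzantine faults, an equivocating author might produce two different blocks with the same round and author, and we need the pivots at $p$ and $q$ to be literally the same block, with the same digest. The paper supplies this directly: the lemma's hypothesis is that the validators commit the same block, and Fact~\ref{fact:dag}(iii) is stated per block, which covers equivocation because references are to specific blocks. A secondary subtlety is that the local DAGs may hold extra blocks beyond the pivot's history, so the argument must show that \pislink never depends on them. This follows because reachability from $b_{leader}$ only traverses edges out of blocks already in its causal history. With that observation made explicit, the proof is a short unfolding of definitions.
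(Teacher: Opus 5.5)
Your proposal is correct and follows the paper's proof essentially step for step. Both arguments identify the window as the pivot's causal history restricted to rounds $[r-\pinterval, r]$, with the round read from the block's own content. Both then use \Cref{fact:dag}(i) to place that history in each validator's DAG, \Cref{fact:dag}(iii) to make the two histories coincide, and the determinism of \pgetsubdag to conclude. Your extra remarks (that \pislink only traverses edges inside the shared history, and that equivocation is excluded because the hypothesis fixes one block) just make explicit what the paper leaves implicit.
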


    \begin{proofsketch}
        The window is a function of the pivot's causal history and \pinterval alone, and the histories coincide by \Cref{fact:dag}(iii); \pgetsubdag is deterministic.
    \end{proofsketch}

    \begin{proposition}
        \label[proposition]{prop:agreement}
        For any deterministic update rule, such as the AIMD rule of \Cref{alg:dual_mode}, any two honest validators agree on every configuration both have reached, its start round and its leader count, and on the verdict of every slot of every configuration both have closed.
    \end{proposition}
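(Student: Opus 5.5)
I will prove the proposition by induction on the configuration index $k$, with a stronger hypothesis $H(k)$. $H(k)$ says that any two honest validators that have both reached $C_k$ agree on its start round and leader count $\ell(C_k)$. It also says that, if both have closed $C_0,\dots,C_{k-1}$, they agree on the verdict of every slot in those configurations, including which slot was the pivot of each. The base case $H(0)$ is immediate: every validator starts in $C_0=(0,1)$ with no closed configuration. The update rule is treated as an arbitrary deterministic function of its inputs. These inputs are the pivot's window, the current leader count, and the rule's internal state, such as \pbackoff. I will carry that state along in the hypothesis: it is itself a deterministic function of the windows of the earlier pivots, so agreement on those gives agreement on the state.

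For the inductive step, fix two honest validators $p,q$ that have both closed $C_k$. By $H(k)$ they agree on $C_k$, so on rounds from its start round onward they interpret the DAG under the same leader count. That is, slot $(r,i)$ exists for $i<\ell(C_k)$ and is led by the same validator, as A1 makes the leader a deterministic function of $(r,i)$ and the configuration. Base safety (A4) for a fixed configuration then tells us that $p$ and $q$ commit the same sequence of leaders within $C_k$, and in particular give the same verdicts to its slots. The pivot is defined as the first committed slot whose round exceeds the start round by more than \pinterval. It is therefore a function of the common committed sequence, so $p$ and $q$ identify the same pivot block $b$. By \Cref{lemma:window} they compute the same window $W$. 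Each then applies the same deterministic rule to $W$, the same count $\ell(C_k)$, and the same state, so both obtain the same $\ell(C_{k+1})$, and both set the start round of $C_{k+1}$ to $b.round$. This establishes $H(k+1)$. I also need the case where $p$ has reached $C_{k+1}$ but has not closed it. That case needs only the pivot of $C_k$ and the window, which are already covered.

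The main obstacle is justifying the use of A4 inside $C_k$, because A4 speaks of a \emph{fixed} configuration. Here the configuration changes between rounds, and \ptrydecide reads blocks and anchors from rounds beyond the pivot. Two points must be checked. First, every slot of $C_k$ is decided under $\ell(C_k)$. This holds because \ptrycommit returns immediately after the pivot, so the pivot round's remaining slots keep the old count and no round mixes two counts. Second, the decider for a slot uses only its wave's blocks (A3) and anchors derived from the committed structure, all of which are interpreted identically by $p$ and $q$ by the induction hypothesis. My plan is a simulation argument. Consider the run of the base protocol with the configuration fixed forever at $C_k$. On all slots up to and including the pivot, its decisions coincide with the composed protocol's, because both see the same DAG and the same slot sets for those rounds. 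A4 applied to that fixed-configuration run then gives the required agreement.

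If the indirect rule's anchor can lie in a later configuration, I will extend the hypothesis. In that case the anchor lies in rounds after the pivot, so it is determined by $C_{k+1}$, on which $p$ and $q$ already agree by the step above. The argument then becomes a simultaneous induction over the finite prefix that both validators have processed.
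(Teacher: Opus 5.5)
Your skeleton matches the paper's: induction over configurations with the back-off carried along, A4-safety on the fixed schedule of $C_k$, the pivot read off the agreed verdicts, \Cref{lemma:window}, and determinism of the update rule. The gap is at the step you yourself call the main obstacle.

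Your simulation claim says the composed run agrees with the run frozen at $C_k$ up to the pivot ``because both see the same DAG and the same slot sets for those rounds''. That reason does not suffice. An indirect verdict for a slot at or below the pivot reads its anchor from rounds \emph{above} the pivot, where agreement of slot sets on the lower rounds says nothing. Your fallback for the case where that anchor lies in $C_{k+1}$ fails for two reasons:
\begin{itemize}
\item Appealing to agreement on $C_{k+1}$ ``by the step above'' is circular. $C_{k+1}$ is computed from the pivot, and identifying the pivot already requires agreement on the verdicts of the slots below it, which in that scenario depend on $C_{k+1}$.
\item Even granting that, agreeing on which schedule an anchor is read under is not agreement on the resulting verdict. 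A4 speaks only of derivations against one fixed configuration.
\end{itemize}
Similarly, the fact that $p$ and $q$ ``interpret anchors identically'' is not what A4 needs. A4 needs each derivation to be against the single schedule of $C_k$; cross-validator agreement is its conclusion, not an input. The ``simultaneous induction'' is left unspecified and does not remove the circularity.

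The missing observation, which is how the paper closes this step, concerns timing: for slots up to and including the pivot, the scenario you hedge against never arises. \ptrycommit changes the count only after it returns. So every invocation that produces such a verdict (the one committing the pivot and all earlier ones) runs \ptrydecide with $\ell(r)=\ell(C_k)$ at every round it visits, including the anchors above the pivot. Each such verdict is therefore literally a verdict of $C_k$'s schedule extended forever. A4-safety applies as is, and no extension of the hypothesis is needed.

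Your worry does have one genuine instance: the slots of the pivot's round after the pivot. \Cref{alg:main} decides these on the \emph{next} invocation, after the switch, so their anchors are read under $\ell(C_{k+1})$. There your move is not circular, since the pivot and $C_{k+1}$ are already agreed. It still needs base agreement for the composed schedule, i.e.\ an agreement law for an arbitrary fixed schedule, as the Lean interface states it, rather than A4 for a single configuration. The paper's timing remark passes over these slots as well.
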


    \begin{proofsketch}
        Induction over configurations. Both validators start in $C_0$. Within $C_k$, every verdict up to and including the pivot's round is a derivation against the one fixed schedule of $C_k$ (\Cref{sec:integration}), so the two validators' verdicts --- in particular the pivot itself --- are equal by the safety half of A4. The next configuration is one deterministic function of one agreed pivot and its agreed window (\Cref{lemma:window}), so $C_{k+1}$ is equal at both. A validator that has not reached the pivot has closed a strict prefix and agrees on it, and closes the gap once its view converges (\Cref{fact:dag}(ii)). No synchrony assumption and no base-protocol fact beyond A4-safety is used.
    \end{proofsketch}

    \begin{theorem}
        \label[theorem]{thm:safety}
        \sysname composed with any base protocol satisfying A1--A4 satisfies the safety of \Cref{def:bab}: the committed sequences of any two honest validators agree as far as both reach and grow only by appending (Agreement, Total Order), and no block is committed twice (Integrity).
    \end{theorem}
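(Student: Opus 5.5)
The plan is to reduce \Cref{thm:safety} to \Cref{prop:agreement} by describing a validator's committed sequence as a concatenation of per-configuration segments. For a validator that has closed $C_0,\dots,C_{k-1}$ and currently sits in $C_k$, I will write its committed leader sequence as $\sigma_0 \parallel \sigma_1 \parallel \dots \parallel \sigma_{k-1} \parallel \tau_k$. Here $\sigma_j$ is the sequence of leaders committed under $C_j$, ending at $C_j$'s pivot. The tail $\tau_k$ is the prefix committed so far under $C_k$. The block-level committed sequence is then the deterministic linearization of the causal histories of these leaders, in order, with each block emitted only the first time it appears. This is the linearization recalled in \Cref{sec:dag_consensus}.

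\textbf{Agreement and Total Order.} Take two honest validators $p$ and $q$, and let $m$ be the number of configurations both have closed.
\begin{itemize}
\item By \Cref{prop:agreement}, $p$ and $q$ agree on the start round and leader count of every $C_j$ with $j \le m$, and on every verdict in $C_0,\dots,C_{m-1}$. Hence their segments $\sigma_0,\dots,\sigma_{m-1}$ coincide.
\item Configuration $C_m$ has the same schedule at both validators. Within a fixed configuration, the safety half of A4 says the two validators commit the same leader sequence. So their partial segments under $C_m$ are prefixes of a common sequence: one of the two is a prefix of the other.
\item If exactly one validator, say $p$, has closed $C_m$, then $q$'s segment $\tau_m$ is a prefix of $p$'s $\sigma_m$, and $q$ has committed nothing beyond it. So the leader sequences are prefix-comparable in every case.
\end{itemize}
Next I will lift this from leaders to blocks. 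By \Cref{fact:dag}(iii), the two validators hold the same causal history for each common leader. Since linearization is deterministic, the block sequences are also prefix-comparable.

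Append-only growth follows from the structure of \ptrycommit (\Cref{alg:main}). It only appends to $S_{commit}$, and it starts past $s_{committed}$. A configuration change takes effect strictly after the pivot's round, so the verdict of any already committed slot is never re-evaluated under a different count. Once a slot's verdict is fixed it is never revised, because A4 within the configuration makes decisions stable. This yields Total Order: if $b$ precedes $b'$ at one validator, then $b'$ cannot precede $b$ at another, because the two sequences are prefix-comparable.

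\textbf{Integrity.} There are two parts.
\begin{itemize}
\item \emph{At most once.} Linearization skips blocks that are already ordered, and the sequence only grows by appending. So no block appears twice.
\item \emph{Only if proposed by its author.} Every committed block lies in the DAG, and the base protocol admits only correctly authored blocks. \sysname adds no path by which a block enters the committed sequence.
\end{itemize}

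\textbf{Main obstacle.} The delicate step is the boundary between $C_m$ and $C_{m+1}$. The pivot's round keeps the old count for its remaining slots, while rounds after it use the new count. I must argue that the safety half of A4 still applies to the segment. The plan is to treat each configuration's segment as a run of the base protocol with the fixed schedule $\ell$ of that configuration, truncated at the pivot. A4 is stated for a fixed configuration, and truncating a safe run preserves prefix agreement. Because both validators identify the same pivot (\Cref{prop:agreement}), the truncation point is also the same at both.
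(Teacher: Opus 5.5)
Your proof is correct. For Agreement and Total Order it follows the paper's argument: split the committed sequence into per-configuration segments and reduce to \Cref{prop:agreement}. Your separate treatment of the common open configuration via A4-safety is what the proposition's proof already does when it shows that verdicts agree wherever both validators have decided.

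The real difference is Integrity. The paper stays with the leader ledger read off the verdicts and argues structurally. A commit verdict names a candidate of its slot (that round, that leader). The leaders $\pgetleader(r),\dots,\pgetleader(r+\ell-1)$ of one round are distinct because $\ell \le \pmaxproposersperround \le n$. Each round lies in exactly one configuration. Hence a leader block occupies exactly one slot. You instead lift to blocks through \Cref{fact:dag}(iii) and the deterministic linearization, and get at-most-once from the linearization emitting only not-yet-ordered blocks.

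Your route has two advantages: it speaks directly about the block sequence of \Cref{def:bab}, and it is independent of the schedule. It also has two costs. It uses a property of the ordering layer that is not among A1--A4. And because it holds by construction, it never shows that the leader sequence itself is duplicate-free. That is exactly where a multi-leader schedule could fail: a validator leading two slots of one round would have its block committed twice as a leader. The paper's route uses only A1's schedule and the fact that decisions are about candidates, and it delivers that leader-level statement.

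Two small points. First, the ``only if proposed by its author'' clause is not part of this theorem and would need an authentication assumption outside A1--A4. Second, the paper draws the seam between configurations by rounds: $C_k$ owns every slot up to and including its pivot's round, all decided under $\ell(C_k)$. Your $\sigma_j$ instead stops at the pivot slot. That pushes the old-count slots after the pivot into the next segment, where ``within a fixed configuration'' does not literally apply. Cutting segments by rounds removes the issue.
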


    \begin{proofsketch}
        A validator's committed sequence is determined by its verdicts, which agree configuration by configuration (\Cref{prop:agreement}); hence the sequences agree on the common range and extend by prefixes. For Integrity, the commit boundary is a slot, and a commit verdict names a block of its slot's round and leader, so a block occupies exactly one slot.
    \end{proofsketch}

    \begin{remark}
        \label[remark]{remark:conservativity}
        With the constant update rule, \sysname \emph{is} the base protocol: the same schedule, the same verdicts, the same ledger. The AIMD rule additionally keeps the leader count within $[1, \pmaxproposersperround]$, and its threshold test is one integer comparison.
    \end{remark}

    \begin{lemma}
        \label[lemma]{lemma:progress}
        After GST, if every honest validator has closed configuration $C_k$, then some honest validator closes $C_{k+1}$ within bounded time, and every honest validator does within a further $\Delta$.
    \end{lemma}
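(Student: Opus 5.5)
The plan is to reduce \Cref{lemma:progress} to the liveness half of A4, applied to the single fixed configuration $C_{k+1}$, and then to spread the pivot to all honest validators using \Cref{fact:dag}(ii) together with the delivery bound $\Delta$.

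\emph{Step 1: reduce to a fixed configuration.} By \Cref{prop:agreement}, all honest validators that have closed $C_k$ agree on the start round $s_{k+1}$ of $C_{k+1}$ and on its leader count $\ell(C_{k+1})$. From round $s_{k+1}+1$ onward they therefore all interpret the DAG under one and the same schedule: the one fixed by (A1) with leader count $\ell(C_{k+1})$. Slots up to and including the pivot round $s_{k+1}$ are decided under $C_k$ and are already agreed. So, until the next pivot, the composed system behaves exactly like the base protocol run under a fixed configuration.

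\emph{Step 2: some honest validator commits a leader beyond round $s_{k+1}+\pinterval$.} By the liveness half of A4, after GST honest validators commit new leaders infinitely often. Every honest validator keeps proposing one block per round (A1), so rounds grow without bound. Committed leaders advance in round-then-slot order, so eventually one is committed at a round greater than $s_{k+1}+\pinterval$. The first such committed leader is by definition the pivot of $C_{k+1}$, and \ptrycommit triggers \pupdateleaders on it; this closes $C_{k+1}$.

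To get a bounded-time statement rather than merely ``eventually'', I would use the quantitative form of liveness the base protocols provide. After GST every round advances within a bounded delay (a constant multiple of $\Delta$ plus the leader timeout). In addition, at most $f{+}c$ slots per round are undecided by the direct rule, and some honest leader's slot is decided as commit within $O(\pwavelength)$ rounds. Together these give a bound of roughly $(\pinterval + O(\pwavelength))$ round-times after the later of GST and the time all honest validators closed $C_k$.

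\emph{Step 3: propagation.} Let $p$ be the honest validator that first commits the pivot $b$. Then $p$ holds $b$ and its causal history, including the blocks that certify $b$'s decision. By \Cref{fact:dag}(ii), and since we are after GST, every honest validator holds these blocks within $\Delta$. By \Cref{fact:dag}(iii) and the determinism of the decider, every honest validator then reaches the same verdicts up to $b$ and commits $b$ as its pivot. The determinism of the decider is what makes the decision depend only on the shared causal history (A3).

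\emph{Main obstacle.} The delicate point is Step 3's claim that the evidence $p$ used is itself contained in blocks $p$ has already broadcast, or that others hold. $p$'s decision may rest on blocks of rounds above $b$ (votes, certify round, or the anchor for the indirect rule) that $p$ merely received. I would handle this by noting that any block held by an honest validator reaches every honest validator within $\Delta$ after GST (\Cref{fact:dag}(ii)). The whole finite set of blocks $p$ used therefore arrives everywhere within $\Delta$, and monotonicity of the decision rules under DAG growth (a commit or skip verdict, once derivable, stays derivable) yields the same prefix. If the base protocols' liveness is only qualitative, I would instead import the bounded-latency liveness proved for the leader rotation in \Cref{app:lean}.
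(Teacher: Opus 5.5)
Your outline has the paper's shape: A4-liveness on the fixed schedule of $C_{k+1}$, the least committed slot past the threshold as the pivot, and delivery within $\Delta$ for the rest. However, your bounded-time step does not go through. From ``some honest leader's slot is decided as commit within $O(\pwavelength)$ rounds'' you infer that the pivot is committed within $\pinterval + O(\pwavelength)$ rounds. That does not follow. A slot is committed only when \ptrycommit's walk reaches it, i.e.\ when every earlier slot is decided.

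A faulty-led slot left undecided by the direct rule is resolved only through its anchor. This covers a Byzantine leader splitting the votes, or a crashed leader under Nemo-Nemo, whose direct skip never fires. The anchor is the first non-skip slot beyond the wave, which is the \emph{head} of a later round and may itself be a faulty-led undecided slot. Take the rotation $\pgetleader(r+l)$ with faulty validators spaced every $\pwavelength$ positions. Chains of undecided heads then last $\Theta(n)$ rounds and block the walk throughout, even while honest non-head slots are committed directly.

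This is exactly why the paper's rotation remark says the classic ``wave of honest-led slots'' argument has no instance with several leaders per round. It replaces that argument with $\pwavelength$ consecutive reliably-led heads found by a pigeonhole, giving a commit gap of $n + \pwavelength - 1$. The paper's proof uses the quantitative liveness clause with that gap $c$: (a) every slot sufficiently below the horizon is decided, and (b) some slot $\kappa_0$ within $c$ rounds of the threshold round is committed. The pivot then lies at or below $\kappa_0$, and the walk, which stops at the first undecided slot, reaches it. Your closing fallback (importing the bounded liveness of the rotation) is this argument, and it has to be the main one.

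Your propagation step also rests on monotonicity of verdicts under DAG growth. That property is not among A1--A4, whereas the lemma is claimed for every base protocol satisfying them. It does hold for the decider of \Cref{alg:decider}. By A2 a held block's parents are fixed, so skip and support counts only grow, and an indirect verdict reads only the anchor's fixed causal history and the verdicts of higher slots. But you would have to prove this and add it to the interface.

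Relatedly, the votes, certificates and anchor behind the pivot's verdict are descendants of the pivot, not part of its causal history, contrary to the first sentence of your Step 3. The paper does not need monotonicity. It fixes an \emph{arbitrary} honest validator $v$ and applies the liveness clause to $v$'s own view. It then consumes the safety half of A4 to identify the commit that liveness supplies with $v$'s own verdict. Hence every honest validator closes $C_{k+1}$ on its own view within bounded time, and delivery within $\Delta$ only sharpens the timing for laggards.
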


    \begin{proofsketch}
        The liveness half of A4 supplies committed slots past the threshold round of $C_{k+1}$'s pivot; the least such slot is the pivot. The lemma consumes base \emph{safety} as well: the commit that liveness supplies and the verdict that the schedule update reads are identified by agreement. Delivery within $\Delta$ then propagates the pivot's history to every honest validator (\Cref{fact:dag}(ii)), and determinism closes $C_{k+1}$ everywhere.
    \end{proofsketch}

    \begin{theorem}
        \label[theorem]{thm:liveness}
        After GST, every honest validator's run reaches every height: for every $k$ it closes configuration $C_k$, and committed slots keep appearing.
    \end{theorem}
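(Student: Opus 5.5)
The plan is induction on $k$, with \Cref{lemma:progress} as the inductive step; afterwards I derive the ``committed slots keep appearing'' clause from the unbounded pivot rounds. The base case is that every honest validator starts in $C_0 = (0,1)$, so all of them have reached $C_0$ from the outset. Note that \Cref{lemma:progress} is phrased in terms of \emph{closing}: its hypothesis is that every honest validator has closed $C_k$, and its conclusion is that every honest validator closes $C_{k+1}$. I therefore need a separate base step showing that every honest validator closes $C_0$ after GST. This is the same argument as the lemma, run with the trivial configuration $C_0$ in the role of $C_k$. The liveness half of A4 for the fixed schedule of $C_0$ gives committed leaders at rounds beyond $0 + \pinterval$. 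The least such committed slot is the pivot, and by the safety half of A4 it is the same at every honest validator. \Cref{fact:dag}(ii) then delivers its causal history to every honest validator within $\Delta$.

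For the inductive step, assume every honest validator has closed $C_k$. \Cref{lemma:progress} gives that some honest validator closes $C_{k+1}$ within bounded time, and every honest validator does within a further $\Delta$. Induction then yields that every $C_k$ is closed by every honest validator at some finite time. This matches the paper's convention: we assert that runs reach every height, not that an infinite completed sequence exists.

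For the second clause, I use that the pivot of $C_{k+1}$ has a round strictly greater than the start round of $C_{k+1}$ plus \pinterval, and that start round is the round of the pivot of $C_k$. So pivot rounds increase by more than \pinterval at each step and are unbounded. Each pivot is a committed leader, so for every round $R$ each honest validator eventually commits a slot past $R$; hence its committed sequence grows without bound. One point needs a brief check. When \ptrycommit returns at a pivot, the remaining slots of the pivot's round are evaluated under the old count on the next invocation. Liveness of $C_{k+1}$ must therefore also cover these leftover slots. I would handle this by observing that they are slots of the agreed schedule of $C_k$, so the A4 argument for $C_k$ covers them.

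The main obstacle is applying A4 at all. A4 is stated for a \emph{fixed} configuration, whereas after GST the run switches configurations infinitely often, and each $C_{k+1}$ begins at a round that may itself have been produced before GST. To handle this, I would view the run from the start round of $C_{k+1}$ onward as an execution of the base protocol with the single fixed schedule $\ell \equiv \ell(C_{k+1})$. This is legitimate because, by \Cref{prop:agreement}, all honest validators agree on $C_{k+1}$ and its start round, and the slots decided before that start round are already fixed and agreed. By A1 and A3, verdicts for rounds at or past the start round depend only on blocks of those rounds, so they coincide with the verdicts of that fixed-schedule execution. A4 liveness then applies to that execution. The bounded time in \Cref{lemma:progress} must be uniform enough that each step finishes; finiteness per step is all that is needed, since I claim no global rate.
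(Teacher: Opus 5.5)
Your proof is correct and takes the paper's route: induction on $k$ with \Cref{lemma:progress} as the inductive step, and the second clause from the fact that every closing commits its pivot while successive pivot rounds increase by more than \pinterval. Your explicit base step for closing $C_0$ fills an index gap the paper passes over with ``height $0$ holds initially'' (the lemma's hypothesis literally starts from an already closed configuration), and your remarks on the leftover slots of the pivot round and on applying A4 to a fixed schedule are sound elaborations of points the paper leaves implicit or handles inside \Cref{lemma:progress}.
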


    \begin{proofsketch}
        Within a configuration, by the per-configuration clause of A4-liveness; across configurations, by \Cref{lemma:progress} applied repeatedly. The leader count may oscillate, but configuration start rounds strictly increase. In the finite-horizon form of the statement (\Cref{app:lean}), every slot lying at least a wave plus the commit gap below the horizon is decided; the margin is necessary, not an artifact: no protocol can decide a slot at the last round under the horizon, since no decision anchor's wave fits above it.
    \end{proofsketch}

    \begin{remark}
        \label[remark]{remark:rotation}
        For the leader schedule used in this paper, the per-configuration liveness clause of A4 is proved rather than assumed, for all four base protocols at every leader count (\Cref{app:lean}). Because the classic argument --- a wave of consecutive honest-led slots --- has no instance once several leaders share a round, a new argument replaces it: a descent through the first slot of each round and a pigeonhole on the rotation, yielding a commit gap of $n + \pwavelength - 1$ rounds.
    \end{remark}

\else

    This section establishes that \sysname preserves the correctness of the base protocol, relying solely on the interface A1--A4 from \Cref{sec:overview}; nothing below is specific to a base protocol or to the AIMD rule. Configurations $C_0, C_1, \dots$ are the successive leader counts with their start rounds (\Cref{sec:integration}); a validator \emph{closes} $C_k$ when it commits the $k$-th pivot. We state the two theorems; \Cref{app:proofs} states the supporting lemmas and gives full proofs, and every statement is machine-checked in Lean (\Cref{app:lean}). We open source our Lean formalization.\footnote{\leanlink}

    \begin{theorem}
        \label[theorem]{thm:safety}
        \sysname composed with any base protocol satisfying A1--A4 satisfies the safety of \Cref{def:bab}: the committed sequences of any two honest validators agree as far as both reach and grow only by appending (Agreement, Total Order), and no block is committed twice (Integrity).
    \end{theorem}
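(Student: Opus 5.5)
The plan is to reduce safety of the composed protocol to agreement on configurations and verdicts, via \Cref{prop:agreement} (stated in the extended version and assumed here), and then read off Agreement, Total Order and Integrity from the deterministic way \ptrycommit turns verdicts into a committed sequence.

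First, I formalize a validator's committed sequence as a function of its verdict sequence. Slots are ordered in round-then-slot order, with the number of slots in round $r$ given by $\ell(r)$ of the configuration in force at $r$. \ptrycommit walks the verdicts in this order up to the first undecided slot. It outputs the leader blocks with commit verdicts, each followed by the deterministic linearization of its causal history not yet output. This map is monotone: extending the decided prefix of verdicts only appends to the output.

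Next, I fix two honest validators and take the configurations both have reached. By \Cref{prop:agreement} they agree on each such configuration's start round and leader count, so they agree on the slot index space up to the common horizon. They also agree on the verdict of every slot in every configuration both have closed. For the last, partially decided configuration, both derive verdicts against the same fixed schedule, so the safety half of A4 gives agreement on the common decided prefix.

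The causal histories being linearized coincide by \Cref{fact:dag}(iii), and linearization is deterministic. Hence the two committed sequences agree on their common prefix and each extends only by appending, which is Agreement (in the prefix form stated) and Total Order.

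For Integrity, a commit verdict on slot $(r,i)$ names a block of round $r$ authored by the leader of $(r,i)$. Within a round the leaders of distinct slots are distinct validators, and each validator proposes one block per round (A1). So each leader block occupies at most one slot, while non-leader blocks are emitted at most once because linearization outputs only the not-yet-committed part of a causal history. Authenticity of the author is inherited from the base protocol.

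The main obstacle is the configuration switch at a pivot. I must check that the remaining slots of the pivot's round are decided under the old count at every validator, so that neither validator evaluates a round under a mixed count. I would argue this from the early return in \ptrycommit together with the convention that $C_{k+1}$ starts strictly after the pivot's round, so that the index space is well defined and shared.
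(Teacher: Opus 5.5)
Your proposal is correct and follows essentially the paper's proof. Like the paper, you reduce to Leader-Count Agreement, read Agreement and Total Order off the append-only, configuration-by-configuration structure of the committed sequence (your explicit handling of the partially decided last configuration and of causal-history linearization only spells out what the paper leaves implicit), and get Integrity from distinct leaders within a round, one verdict per slot, and each round belonging to exactly one configuration. Two small fixes: justify distinctness of a round's leaders as the paper does, from the rotation $\pgetleader(r), \dots, \pgetleader(r + \ell - 1)$ with $\pmaxproposersperround \leq n$, since A1 alone does not give it; and drop the appeal to one block per validator per round, which your argument does not need and which fails for equivocating Byzantine leaders.
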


    \begin{proofsketch}
        Two honest validators agree on every configuration both have reached (\Cref{prop:agreement}): within a configuration their verdicts agree by the safety half of A4, so they commit the same pivot and, by A2, hold the same window (\Cref{lemma:window}); the next configuration is a deterministic function of both. The committed sequences therefore agree configuration by configuration, and a commit verdict names one block per slot.
    \end{proofsketch}

    \begin{theorem}
        \label[theorem]{thm:liveness}
        After GST, every honest validator's run reaches every height: for every $k$ it closes configuration $C_k$, and committed slots keep appearing.
    \end{theorem}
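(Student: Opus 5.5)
The plan is an induction on the configuration index $k$, with the per-configuration progress argument of \Cref{lemma:progress} as the inductive step and A4-liveness supplying commits inside each configuration. The induction statement is: after GST, every honest validator eventually closes $C_k$. The base case is $C_0 = (0,1)$, which every validator starts in. Closing $C_0$ means committing its pivot, the first committed slot of round greater than $\pinterval$. A4-liveness for the fixed configuration $C_0$ gives commits infinitely often, so some honest validator commits a leader beyond that threshold round, and the least such committed slot is the pivot.

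For the inductive step, assume every honest validator has closed $C_k$. By \Cref{prop:agreement} they agree on its start round and leader count, so from the round after the pivot all of them run the single fixed schedule of $C_{k+1}$. I would then invoke \Cref{lemma:progress}: some honest validator commits past round $\text{start}(C_{k+1}) + \pinterval$ within bounded time. By Fact~\ref{fact:dag}(ii) and delivery within $\Delta$, every honest validator receives the pivot and its causal history within a further $\Delta$. Determinism (\Cref{lemma:window} and the update rule) then makes every honest validator close $C_{k+1}$.

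The second conjunct, that committed slots keep appearing, would follow in two steps. First, each closure is itself a commit. Second, start rounds strictly increase: the pivot's round exceeds the previous start round by more than $\pinterval \ge 1$. So the closures give infinitely many commits at unboundedly growing rounds, and in particular no validator stalls forever inside one configuration. The fact that the leader count may oscillate under the update rule is harmless, since only the monotone start rounds are used.

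The main obstacle is the interface between A4-liveness and the configuration switch. A4 is stated for a \emph{fixed} configuration, but the run actually executes under a schedule that changes at pivots, and commits in the switching round must be interpreted under the old count. I would handle this by arguing that the suffix of the DAG from $\text{start}(C_{k+1})$ onward is, for decision purposes, indistinguishable from a run of the base protocol under the fixed count $\ell(C_{k+1})$. This holds because the direct rule reads only rounds $[r, r+\pwavelength)$ (A3), and \ptrycommit never mixes counts within a round. A4-liveness then applies to that suffix. If the anchor chains of the indirect rule were to reach back across the boundary, I would fall back on the finite-horizon form with the commit gap $n + \pwavelength - 1$ of \Cref{remark:rotation}, which bounds how far beyond the threshold round one must wait for a decided slot.
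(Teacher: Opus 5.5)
Your proposal is correct and follows the paper's proof. It argues by induction on $k$ with \Cref{lemma:progress} as the inductive step (your explicit closing of $C_0$ is that lemma's argument applied to $C_0$), each closure commits at least the pivot, and strictly increasing start rounds make the oscillating leader count harmless. The fixed-schedule worry in your last paragraph is already handled inside \Cref{lemma:progress} and \Cref{prop:agreement} by the update semantics: every verdict of a configuration's rounds, including indirect decisions whose anchors lie \emph{above} the pivot boundary, is derived against that configuration's fixed schedule because \ptrycommit switches the count only after returning, so the fallback to the rotation's commit gap is unnecessary (and would not resolve a cross-boundary anchor anyway).
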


    \begin{proofsketch}
        Within a configuration, by the liveness half of A4. Across configurations, A4-liveness supplies the next pivot within bounded time, delivery within $\Delta$ propagates it to every honest validator, and determinism closes the next configuration everywhere (\Cref{lemma:progress}); start rounds strictly increase.
    \end{proofsketch}
\fi

\para{What \sysname does not guarantee}
The chosen leader count is a heuristic, not proved optimal. Byzantine validators can bias one window's measurement and make the count suboptimal for one interval, but the count stays within $[1, \pmaxproposersperround]$ and never falls below the single-leader baseline (\Cref{sec:evaluation}).

\section{Implementation}\label{sec:implementation}
\ifextended
    \sysname is a scheduler module of ${\sim}270$ LOC that adds no messages, no cryptography, and no persistent state beyond the current configuration, and runs unchanged on all four base protocols. We implement it in Rust by forking the Mysticeti codebase~\cite{mysticeti-code}, inheriting its networking (\texttt{tokio}~\cite{tokio} over TCP), signatures (\texttt{ed25519-consensus}~\cite{ed25519-consensus}), and write-ahead-log crash recovery. Our simulator replaces the networking stack behind the same interface: a discrete-event controller emulates the \texttt{tokio} runtime and the TCP links, so every other component, consensus logic included, runs unmodified in simulation and in deployment. We are open-sourcing our implementation, along with the simulator and scripts\footnote{\codelink}.

    The scheduler module implements \Cref{alg:dual_mode}: it extracts the window, counts the slots decided as commit by the direct rule within the window using the base protocol's own wave length and commit threshold, and applies the AIMD update. An additional ${\sim}500$ LOC integrate the scheduler into the commit path and the protocol configuration. In the commit path (\Cref{alg:main}), the committer truncates the decision sequence at the pivot, invokes the scheduler, and rebuilds its decider instances for the new leader count. The protocol configuration gains three parameters: an adaptive scheduling toggle, the maximum leader count, and the interval. This code runs unchanged across all protocol variants, as the scheduler reads the wave length and thresholds from the protocol description and never inspects the fault model.

    \para{Simulation layer}
    We inherit this simulation layer from the base codebase, and extend it (${\sim}600$ LOC) to model time-varying, heterogeneous networks, where every link has an independent one-way latency range in each direction. A run is scripted as a sequence of phases. Each phase can designate specific validators as slow, degrading every link that touches them, or optionally only their outbound links to a chosen subset of validators, beyond the leader timeout. In the configuration of \Cref{sec:evaluation} a slow validator's blocks reach every other validator late, so its leader slots miss the direct rule after the leader timeout; degrading only part of its outbound links instead reproduces the partial-quorum situation of \Cref{fig:hol}. The simulator records, per validator and sampling window, the committed latency, the leader count in force, and the leader-slot decisions by type, to support the plots and numbers of \Cref{sec:evaluation}.
\else
    \sysname is a scheduler module that adds no messages, no cryptography, and no persistent state beyond the current configuration, and runs unchanged on all four base protocols. We implement it in Rust by forking the Mysticeti codebase~\cite{mysticeti-code}, inheriting its networking (\texttt{tokio}~\cite{tokio} over TCP), signatures (\texttt{ed25519-consensus}~\cite{ed25519-consensus}), and write-ahead-log crash recovery. Our simulator replaces only the networking stack, so the consensus logic runs unmodified in simulation and in deployment. The scheduler module implements \Cref{alg:dual_mode}, and ${\sim}500$ LOC integrate it into the commit path (\Cref{alg:main}) and the protocol configuration; a further ${\sim}600$ LOC extend the simulator to time-varying, heterogeneous networks. \Cref{app:impl} details the integration and the simulation layer. We are open-sourcing our implementation, along with the simulator and scripts\footnote{\codelink}.
\fi

\section{Evaluation}\label{sec:evaluation}
This section benchmarks \sysname on the base protocols of \Cref{sec:implementation}. Our evaluation has a single goal: quantify what adapting the leader count buys over the best \emph{static} leader count in each regime. Our evaluation makes the following claims:
\begin{itemize}
    \claimitem{claim:c1} In a healthy network, \sysname matches the latency of the best static leader count.
    \claimitem{claim:c2} When slow leaders stop slots from being decided as commit by the direct rule, \sysname reduces the leader count until that stops happening and matches the latency of the best static leader count for the degraded regime, well below that of a high static count.
    \claimitem{claim:c3} \sysname quickly adapts in both directions: it reduces the leader count when the network degrades and increases it when it heals; the transition adds little to the degraded-phase latency.
    \claimitem{claim:c4} \Cref{claim:c1,claim:c2,claim:c3} hold unchanged across base protocols with different fault models, wave lengths, and decision quorums, with identical scheduler code and parameters.
\end{itemize}
Benchmarking and testing BFT protocols under actual Byzantine behavior is an open problem~\cite{twins}; the state of the art establishes worst-case guarantees through formal proofs, which we give in \Cref{sec:proofs} for both safety and liveness.

\subsection{Experimental setup}
\ifextended
    \para{Simulator and scenario}
    We rely on simulation because degradations must be identical across base protocols and leader configurations, switch on and off at scripted times, and remain reproducible. None of this is possible in a cloud deployment, where slowness cannot be injected on demand and never repeats identically. We simulate a committee of $n{=}10$ validators in a full mesh where the one-way link latency is uniform in $[50, 100]$\,ms. The execution consists of three phases: a healthy phase of $100$\,s, a degraded phase of $400$\,s, and a healthy recovery phase of $100$\,s. During the degraded phase, a set of \emph{slow} validators has every link that touches it, in both directions, raised to a latency uniform in $[1000, 1300]$\,ms, beyond the $1$\,s leader timeout; all other links are unchanged. The slow set is as large as the base protocol tolerates while the remaining validators still form a quorum: three validators for Mysticeti ($f{=}3$)\ifhydrozoan, Nemo-Nemo ($c{=}3$), and Hydrozoan\else{} and Nemo-Nemo ($c{=}3$)\fi, one for Blue Bottle, whose $4n/5{+}1$ quorum tolerates a single fault at $n{=}10$, and two for Orcaella ($f{=}1$, $c{=}1$), whose $n{-}f{-}c$ quorum of eight tolerates exactly two faults at $n{=}10$.  A slow validator falls rounds behind, so when it is a leader its block reaches nobody in time: every other validator waits out the leader timeout before proposing without it, and the slot is then skipped: directly for Mysticeti, Blue Bottle, and Orcaella (round-level blocking), or through the indirect decision rule for Nemo-Nemo\ifhydrozoan{} and Hydrozoan\fi, whose direct-skip quorum is unanimity, in which case the slot holds every later slot until an anchor resolves it (commit-level blocking, \Cref{fig:hol}). Either way the slot is not decided as commit by the direct rule and lowers the direct decision rate, and with more leader slots per round, more rounds contain a slow leader: three rounds in ten with one leader, seven in ten with five. The \sysname scheduler operates with $\pinterval = 25$ rounds (about $2$\,s while healthy and $10$--$20$\,s while degraded, since rounds slow down), $\pmaxproposersperround = 5$, and $\pthreshold = 0.96$.

\para{Configurations and metrics}
We compare three configurations of each base protocol: \sysname, with the scheduler enabled, and two static baselines that never change their leader count, \fixed{1} with a single leader per round and \fixed{5} with five, the two extremes of the range the scheduler may choose from. \sysname runs start with a leader count of five. Our primary metric is end-to-end latency from client submission to commit, measured per validator over sampling windows of $3$\,s while healthy and $5$\,s while degraded. We report the mean over validators, from one run per configuration with a common seed so that the three configurations of a protocol see the same network draws. The per-phase means are summarized in \Cref{tab:eval}; the ``steady'' degraded window is defined as $[200, 500]$\,s, after every \sysname run has reached one leader.
\else
\para{Simulator and scenario}
We simulate, so that degradations are identical across base protocols and leader configurations, scripted, and reproducible; \Cref{app:setup} details the setup. A committee of $n{=}10$ validators runs in a full mesh with one-way link latency uniform in $[50, 100]$\,ms, through a healthy phase of $100$\,s, a degraded phase of $400$\,s, and a healthy recovery phase of $100$\,s. During the degraded phase, a set of \emph{slow} validators, as large as the base protocol tolerates (three for Mysticeti\ifhydrozoan, Nemo-Nemo, and Hydrozoan\else{} and Nemo-Nemo\fi, one for Blue Bottle, two for Orcaella), has every link that touches it raised to $[1000, 1300]$\,ms, beyond the $1$\,s leader timeout, so its leader slots are not decided as commit by the direct rule. The scheduler operates with $\pinterval = 25$ rounds, $\pmaxproposersperround = 5$, and $\pthreshold = 0.96$.

\para{Configurations and metrics}
We compare three configurations of each base protocol: \sysname, starting at five leaders, and two static baselines, \fixed{1} and \fixed{5}. Our metric is end-to-end latency from client submission to commit, averaged over validators, from one run per configuration with a common seed. \Cref{tab:eval} summarizes the per-phase means; the ``steady'' degraded window is $[200, 500]$\,s, after every \sysname run has reached one leader.
\fi

\begin{table}[t]
    \centering
    \footnotesize
    \caption{Mean end-to-end latency (ms) per phase; the degraded phase is its steady window ($200$--$500$\,s). \sysname rows give the difference to \fixed{1} and to \fixed{5}.}
    \label{tab:eval}
    \begin{tabular*}{\textwidth}{@{\extracolsep{\fill}}llrrr@{}}
        \hline
        Protocol & Configuration & Healthy & Degraded (steady) & Recovery \\
        \hline
        Mysticeti & \fixed{1} & 348 & 2,084 & 424 \\
        & \fixed{5} & 326 & 3,183 & 478 \\
        \rowcolor{black!10} & \sysname & 328 (-6\%, +1\%) & 2,077 (-0\%, -35\%) & 408 (-4\%, -15\%) \\
        \hline
        Blue Bottle & \fixed{1} & 274 & 753 & 301 \\
        & \fixed{5} & 236 & 1,693 & 305 \\
        \rowcolor{black!10} & \sysname & 238 (-13\%, +1\%) & 745 (-1\%, -56\%) & 270 (-10\%, -11\%) \\
        \hline
        Nemo-Nemo & \fixed{1} & 259 & 1,964 & 344 \\
        & \fixed{5} & 232 & 3,484 & 396 \\
        \rowcolor{black!10} & \sysname & 234 (-10\%, +1\%) & 1,965 (+0\%, -44\%) & 308 (-10\%, -22\%) \\
        \hline
        Orcaella & \fixed{1} & 268 & 1,212 & 311 \\
        & \fixed{5} & 233 & 2,130 & 331 \\
        \rowcolor{black!10} & \sysname & 235 (-12\%, +1\%) & 1,209 (-0\%, -43\%) & 284 (-9\%, -14\%) \\
        \hline
        \ifhydrozoan
            Hydrozoan & \fixed{1} & 340 & 2,273 & 435 \\
            & \fixed{5} & 285 & 5,397 & 494 \\
            \rowcolor{black!10} & \sysname & 288 (-15\%, +1\%) & 2,257 (-1\%, -58\%) & 378 (-13\%, -24\%) \\
            \hline
        \fi
    \end{tabular*}
\end{table}

\subsection{Healthy network}
In the first and last $100$\,s of each panel in \Cref{fig:latency}, the three configurations are indistinguishable except that \fixed{1} is slower.  The \sysname configuration equals \fixed{5} within $1\%$ and beats \fixed{1} by $6\%$ (Mysticeti, $328$ vs $348$\,ms), $13\%$ (Blue Bottle, $238$ vs $274$\,ms), \ifhydrozoan $10\%$ (Nemo-Nemo, $234$ vs $259$\,ms), $12\%$ (Orcaella, $235$ vs $268$\,ms), and $15\%$ (Hydrozoan, $288$ vs $340$\,ms).\else $10\%$ (Nemo-Nemo, $234$ vs $259$\,ms), and $12\%$ (Orcaella, $235$ vs $268$\,ms).\fi{} The gap to \fixed{1} is the queuing-latency saving of running five leaders instead of one (\Cref{sec:introduction}). \sysname retains this saving in full: every healthy window has a direct decision rate of $1$, so the scheduler never lowers the leader count below five. This confirms \Cref{claim:c1}.

\begin{figure}[t]
    \centering
    \ifhydrozoan
        \includegraphics[width=\linewidth]{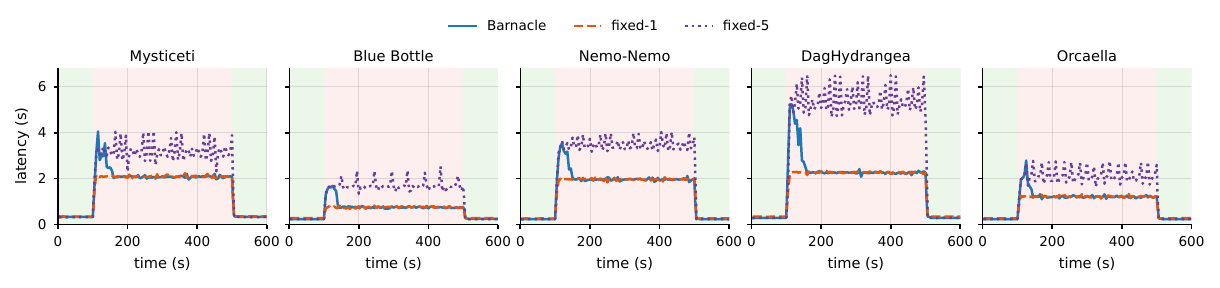}
        \caption{End-to-end latency over time under \sysname and static leader counts of one and five, for Mysticeti, Blue Bottle, Nemo-Nemo, Orcaella, and Hydrozoan, as the network goes healthy $\to$ degraded $\to$ healthy.}
    \else
        \includegraphics[width=\linewidth]{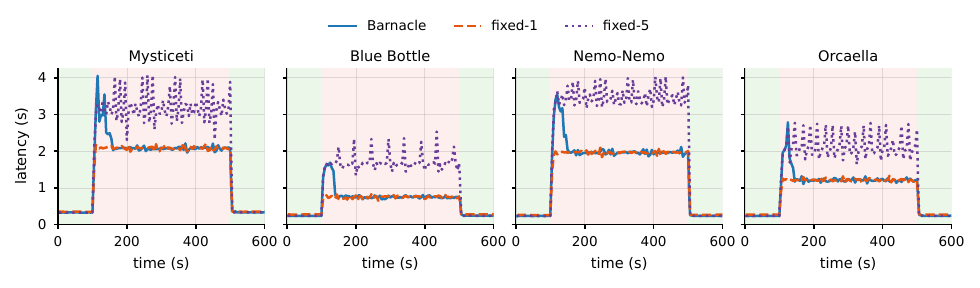}
        \caption{End-to-end latency over time under \sysname and static leader counts (one and five), for Mysticeti, Blue Bottle, Nemo-Nemo, and Orcaella. The network goes healthy $\to$ degraded $\to$ healthy.}
    \fi
    \label{fig:latency}
\end{figure}

\subsection{Degraded network and recovery}
At $t{=}100$\,s latency rises for every configuration, and by an amount that follows the leader count: a round containing a slow leader costs every validator the leader timeout, so the mean round stretches from ${\sim}90$\,ms to ${\sim}380$\,ms under \fixed{1} and ${\sim}730$\,ms under \fixed{5}, and commit latency grows with it. As shown in \Cref{fig:leaders} (Mysticeti), \sysname starts backing off within $15$\,s of the onset, about one interval at the slower round rate, and then follows the $-1, -2, -4$ sequence of the multiplicative decrease, stepping $5 \to 4 \to 2 \to 1$ and reaching one leader about $45$\,s into the phase, after which it remains flat.  Latency (\Cref{fig:latency}, \Cref{tab:eval}) follows the leader count: in the steady window, the Mysticeti \sysname configuration achieves $2.08$\,s, equal to \fixed{1} and $35\%$ below the $3.18$\,s of \fixed{5}; Blue Bottle reaches $0.75$\,s vs $0.75$\,s ($-1\%$) for \fixed{1} and $1.69$\,s ($-56\%$) for \fixed{5}; \ifhydrozoan Nemo-Nemo $1.97$\,s vs $1.96$\,s ($0\%$) and $3.48$\,s ($-44\%$); Orcaella $1.21$\,s vs $1.21$\,s ($0\%$) and $2.13$\,s ($-43\%$); and Hydrozoan $2.26$\,s vs $2.27$\,s ($-1\%$) and $5.40$\,s ($-58\%$).\else Nemo-Nemo $1.97$\,s vs $1.96$\,s ($0\%$) and $3.48$\,s ($-44\%$); and Orcaella $1.21$\,s vs $1.21$\,s ($0\%$) and $2.13$\,s ($-43\%$).\fi{} Over the whole degraded phase, descent included, the \sysname configuration is $5\%$ (Mysticeti), $9\%$ (Blue Bottle)\ifhydrozoan, $6\%$ (Nemo-Nemo), $6\%$ (Orcaella), and $10\%$ (Hydrozoan)\else, $6\%$ (Nemo-Nemo), and $6\%$ (Orcaella)\fi{} above \fixed{1}, which bounds the cost of the transient. No configuration stalls: every $5$\,s window of every validator commits transactions throughout the degraded phase. During recovery at $t{=}500$\,s, latency returns to the healthy level for all configurations within seconds. The leader count climbs $1 \to 2 \to 3 \to 4 \to 5$ one step per interval, arriving back at five about $15$\,s after healing.  The recovery numbers in \Cref{tab:eval} are means over the $100$\,s after healing and thus cover two parts: the drain of the backlog, shared by all configurations, and a healthy remainder in which \sysname, back at five leaders, regains the multi-leader advantage while \fixed{1} keeps its single leader. This is why \sysname leads \fixed{1} in this column too. This confirms \Cref{claim:c2} and \Cref{claim:c3}.

\begin{figure}[t]
    \centering
    \includegraphics[width=\linewidth]{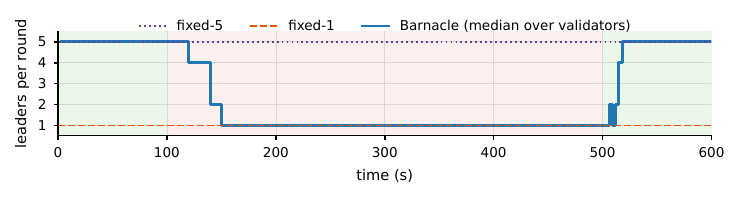}
    \caption{Leader count chosen by \sysname over time (Mysticeti): back-off within one interval of the onset, multiplicative decrease $5 \to 4 \to 2 \to 1$, additive recovery.}
    \label{fig:leaders}
\end{figure}

\subsection{Generality across fault models}
The behavior shown in \Cref{fig:latency,fig:leaders} remains the same across Mysticeti~\cite{mysticeti,mysticeti-code}, Blue Bottle~\cite{blue-bottle}\ifhydrozoan, Nemo-Nemo~\cite{nemonemo}, Orcaella~\cite{orcaella}, and Hydrozoan~\cite{hydrozoan}\else, Nemo-Nemo~\cite{nemonemo}, and Orcaella~\cite{orcaella}\fi, employing identical scheduler code and parameters. Only the absolute latencies differ, set by each base protocol's wave length and quorums:  the price of a slow leader slot ranges from a directly skipped slot (Mysticeti, Blue Bottle, Orcaella) to an indirectly decided one (Nemo-Nemo\ifhydrozoan, Hydrozoan\fi), and the penalty of \fixed{5} over \fixed{1} in the steady window ranges from $53\%$ (Mysticeti) to \ifhydrozoan $137\%$ (Hydrozoan)\else $125\%$ (Blue Bottle)\fi. This confirms \Cref{claim:c4}.  We do not evaluate the sensitivity to \pinterval and \pthreshold; the threshold of $0.96$ tolerates about four non-direct slots per window under five leaders and none under a single leader, and no per-protocol tuning was needed.
\section{Related Work}\label{sec:related_work}

All prior multi-leader DAG protocols fix the leader count at configuration time; to our knowledge, \sysname is the first to adapt it at run time, and the first to compose with any of them as an add-on.

\para{Multi-leader DAG protocols (static leader count)}
The single-leader DAG lineage includes DAG-Rider~\cite{dag-rider}, Narwhal and Tusk~\cite{narwhal}, Bullshark~\cite{bullshark}, and Cordial Miners~\cite{cordial-miners}. To cut latency, recent designs support several leaders per round. Shoal~\cite{shoal} and Shoal++~\cite{shoal++} commit several anchors per round on certified DAGs. Sailfish and Sailfish++~\cite{sailfish,sailfish++} guarantee a leader every round and specify a multi-leader variant. Mysticeti~\cite{mysticeti} operates on uncertified DAGs and recommends two leaders. Blue Bottle~\cite{blue-bottle} targets $n=5f{+}1$ with two-round waves, Nemo-Nemo~\cite{nemonemo} brings the design to the crash-fault model, Orcaella~\cite{orcaella} to a mixed model tolerating $f$ Byzantine and $c$ crashed validators ($n=5f{+}3c{+}1$), and Mahi-Mahi~\cite{mahi-mahi} is an asynchronous protocol with a configurable number of leaders per round. We instantiate \sysname on four of these protocols.

\para{Leader selection: who leads}
Reputation-based election protocols determine who leads. Carousel~\cite{carousel} applies this to chained protocols. Hammerhead~\cite{hammerhead} implements reputation for DAG protocols and is deployed in Sui. Shoal~\cite{shoal} adds reputation to certified DAGs, and Liu et al.~\cite{swle} formalize reputation-based election as a protocol-independent abstraction with generic correctness guarantees, as we do for the leader count. These belong to the same technique family as ours: they derive the schedule deterministically from the committed history, requiring no messages. However, they use a different knob: they exclude unreliable validators from the leader set, whereas \sysname sizes the leader set. The approaches are orthogonal and composable: reputation removes unreliable validators from the leader set, and \sysname then sizes it.

\para{Multi-BFT and multi-proposer protocols}
Parallel consensus systems such as Mir-BFT~\cite{mir-bft}, ISS~\cite{iss}, RCC~\cite{rcc}, and Autobahn~\cite{autobahn} multiplex parallel consensus instances or proposers into one log. The straggler problem of the global ordering, where a slow instance delays the log, is head-of-line blocking in another guise. Ladon~\cite{ladon} attacks this by ordering blocks dynamically. \sysname instead adapts the degree of parallelism; applying it to these protocols is future work.

\para{Adaptive BFT}
BFTBrain~\cite{bftbrain} switches between whole protocols at run time with reinforcement learning, using features that include the fraction of requests committed on the fast path, and coordinates learning through consensus. \sysname adapts a single parameter with a closed-form deterministic rule that needs no learning and no coordination beyond the committed DAG. The rule itself is additive increase, multiplicative decrease, the congestion-control law analyzed by Chiu and Jain~\cite{aimd}, which \sysname applies to regulate the leader count dynamically.

\ifextended
    \section{Conclusion}
\label{sec:conclusion}

Because of head-of-line blocking, no static leader count is right in every regime. We introduced \sysname, which adapts the leader count with an AIMD control loop on the direct decision rate measured on the committed DAG. The mechanism is local, deterministic, and message-free, and preserves the base protocol's safety and liveness. Our evaluation (\Cref{sec:evaluation}) demonstrates that \sysname matches the best static leader count in every regime, reducing latency over a single leader when healthy and avoiding the severe degradation of high leader counts during instability. \sysname is currently being integrated into the Sui blockchain. Because correctness holds for any deterministic update rule, the rule can be swapped without touching the proofs (\Cref{sec:proofs}); future work includes replaying the window under counterfactual leader counts to jump directly to the best one, and extending the approach to multi-proposer protocols outside DAGs.

\fi
\ifpublish
    \section*{Acknowledgements}

This work is partially funded by Mysten Labs.
We thank George Danezis for insightful discussions and for reviewing the Lean~4 formalization of \sysname (\Cref{app:lean}), and Mingwei Tian and Arun Koshy for their feedback on the early design of \sysname.

\fi

\bibliographystyle{splncs04}
\bibliography{references}

\appendix
\section{Base-Protocol Pseudocode}\label{app:base-protocol}
This appendix specifies the base-protocol decision rules that the algorithms of \Cref{sec:protocol} invoke. A single generic decider covers all four base protocols: \Cref{alg:decider} contains the decision rules, parameterized by the three thresholds \pcommitthreshold, \pskipthreshold, and \plinksize; \Cref{alg:DAG_helper} contains the decider instance (wave geometry and DAG predicates) and is identical for every protocol; \Cref{tab:base-protocols} gives the per-protocol values. The pseudocode follows Mysticeti's published algorithms~\cite{mysticeti}, generalized where the four protocols differ.

\subsection{One decider for four protocols}
Six changes over the pseudocode published in the original papers~\cite{mysticeti,blue-bottle,nemonemo,orcaella} make it generic: (i)~The vote round is the round after the propose round, since no protocol decides in fewer than two rounds; for two-round waves the vote and certify rounds coincide. (ii)~Quorum sizes are parameters rather than hard-coded $2f{+}1$. (iii)~\pisvote is a parent lookup: a round-$r$ leader block can only be a direct parent of a round-$(r{+}1)$ block, and taking the first matching parent makes a block vote for at most one block of an equivocating leader. (iv)~\piscert returns \pisvote when the vote and certify rounds coincide: the vote block is itself the certificate. (v)~The skip test is per slot, counting blocks that vote for no block of the slot, not per equivocating leader block. (vi)~\piscertifiedlink counts certified links and compares against \plinksize.

\subsection{The decider instance}
An instance (\Cref{alg:DAG_helper}) is parameterized by a wave offset, a slot index, and the DAG it reads (the local DAG in \Cref{alg:main}, the window in \Cref{alg:dual_mode}). Pipelining uses one instance per offset, making every round the propose round of some wave, a strategy that follows Mysticeti~\cite{mysticeti}. \pleaderblock returns the slot's leader block, or several under equivocation. \pisvote and \pislink are the two primitive relations: direct parenthood of the leader block, and reachability via references.

\begin{algorithm}[t]
    \caption{Decider Instance: Rounds and DAG Helpers (identical for all base protocols)}
    \label{alg:DAG_helper}
    \appendixalgsize
    \begin{algorithmic}[1]
        \State $\pwaveoffset = i$ \Comment{First Decider parameter ($i$): pipelining offset}
        \State $\pleaderoffset = l$ \Comment{Second Decider parameter ($l$): slot in the round}
        \State $\pdag$ \Comment{Third Decider parameter: the DAG the rules read}
        \State $\pwavelength$ \Comment{$3$ (Mysticeti); $2$ (Blue Bottle, Nemo-Nemo)}
        \Statex

        \Procedure{\pwavenumber}{$r$}
        \State \Return $(r - \pwaveoffset) / \pwavelength$
        \EndProcedure
        \Statex

        \Procedure{\pproposeround}{$w$}
        \State \Return $(w \times \pwavelength) + \pwaveoffset$
        \EndProcedure
        \Statex

        \Procedure{\pvoteround}{$w$}
        \State \Return \Call{\pproposeround}{$w$} $+\;1$ \Comment{Two rounds is the minimum}
        \EndProcedure
        \Statex

        \Procedure{\pcertifyround}{$w$}
        \State \Return \Call{\pproposeround}{$w$} $+ (\pwavelength - 1)$ \Comment{$=$ \Call{\pvoteround}{$w$} if $\pwavelength = 2$}
        \EndProcedure
        \Statex

        \Procedure{\pgetdecisionblocks}{$w$}
        \State \Return $\pdag[\Call{\pcertifyround}{w}]$
        \EndProcedure
        \Statex

        \Procedure{\pleaderblock}{$w$} \Comment{Several blocks upon equivocation}
        \State $r_{propose} \gets$ \Call{\pproposeround}{$w$}
        \State $l \gets$ \Call{\pgetleader}{$r_{propose} + \pleaderoffset$} \Comment{Deterministic}
        \State \Return $\{b \in \pdag[r_{propose}] : b.author = l\}$
        \EndProcedure
        \Statex

        \Procedure{\pisvote}{$b_{vote}, b_{leader}$} \Comment{$b_{vote}$ is in round $b_{leader}.round + 1$}
        \State $P \gets [\,b \in b_{vote}.parents : (b.author, b.round) = (b_{leader}.author, b_{leader}.round)\,]$
        \State \Return $P \neq [\;] \land P[0] = b_{leader}$ \Comment{One vote per equivocating leader}
        \EndProcedure
        \Statex

        \Procedure{\pislink}{$b_{old}, b_{new}$}
        \State \Return $\exists$ a sequence of $k \in \mathbb{N}$ blocks $b_1, \ldots, b_k$ s.t.
        \Statex \hspace{2em} $b_1 = b_{old} \land b_k = b_{new} \land \forall j \in [2, k] : b_{j-1} \in b_j.parents$
        \EndProcedure
    \end{algorithmic}
\end{algorithm}

\subsection{The decision rules}
In \Cref{alg:decider}, by the direct decision rule, a slot is decided as skip if \pskipthreshold blocks of the vote round vote for no block of the slot, as commit if \pcommitthreshold blocks of the certify round certify one of its leader blocks, and stays undecided otherwise. By the indirect decision rule, the anchor is the first later slot, beyond the wave, that is not decided as skip. While the anchor is undecided, the slot stays undecided. Once the anchor is decided as commit, the slot is decided as commit if at least \plinksize certify-round blocks are simultaneously certificates for its leader block and ancestors of the anchor, and as skip otherwise.

For Mysticeti~\cite{mysticeti} ($\pwavelength{=}3$), a certificate is a block carrying \pcommitthreshold votes; all thresholds are $n{-}f$; one certified link suffices because after a direct skip no certificate can exist. For Blue Bottle~\cite{blue-bottle} ($\pwavelength{=}2$), the vote is the certificate, and $\plinksize = n{-}3f$ keeps a commit by the indirect decision rule consistent with a skip by the direct decision rule. For Nemo-Nemo~\cite{nemonemo} (crash faults), thresholds are $c{+}1$; the skip threshold $n$ never triggers in practice, so skips are decided by the indirect decision rule; equivocation does not arise. For Orcaella~\cite{orcaella} ($\pwavelength{=}2$, mixed faults), the vote is likewise the certificate, with thresholds of $n{-}f{-}c$ and $\plinksize = n{-}3f{-}2c$ certified links keeping a commit by the indirect decision rule consistent with a skip by the direct decision rule; setting $c{=}0$ recovers Blue Bottle exactly.

\begin{algorithm}[t]
    \caption{Decision Rules (the only part that varies across base protocols, \Cref{tab:base-protocols})}
    \label{alg:decider}
    \appendixalgsize
    \begin{algorithmic}[1]
        \State $\pcommitthreshold$ \Comment{Direct rule, commit: $n-f$ (Mysticeti, Blue Bottle); $n-c$ (Nemo-Nemo); $n-f-c$ (Orcaella)}
        \State $\pskipthreshold$ \Comment{Direct rule, skip: $n-f$ (Mysticeti, Blue Bottle); $n$ (Nemo-Nemo); $n-f-c$ (Orcaella)}
        \State $\plinksize$ \Comment{Indirect rule, commit: $1$ (Mysticeti, Nemo-Nemo); $n-3f$ (Blue Bottle); $n-3f-2c$ (Orcaella)}
        \Statex

        \Procedure{\piscert}{$b_{cert}, b_{leader}$}
        \State $w \gets$ \Call{\pwavenumber}{$b_{leader}.round$}
        \If{\Call{\pcertifyround}{$w$} $=$ \Call{\pvoteround}{$w$}} \Comment{Two-round waves: vote $=$ certificate}
        \State \Return \Call{\pisvote}{$b_{cert}, b_{leader}$}
        \EndIf
        \State $V \gets \{b \in b_{cert}.parents : \Call{\pisvote}{b, b_{leader}}\}$
        \State \Return $|V| \geq \pcommitthreshold$
        \EndProcedure
        \Statex

        \Procedure{\pskippedleader}{$w$} \Comment{Blocks voting for no block of the slot}
        \State $r_{vote} \gets$ \Call{\pvoteround}{$w$}
        \State $B_{vote} \gets \{b \in \pdag[r_{vote}] : \forall b_{leader} \in \Call{\pleaderblock}{w},\ \neg\Call{\pisvote}{b, b_{leader}}\}$
        \State \Return $|B_{vote}| \geq \pskipthreshold$
        \EndProcedure
        \Statex

        \Procedure{\psupportedleader}{$w, b_{leader}$}
        \State $B_{decision} \gets$ \Call{\pgetdecisionblocks}{$w$}
        \State \Return $|\{b \in B_{decision} : \Call{\piscert}{b, b_{leader}}\}| \geq \pcommitthreshold$
        \EndProcedure
        \Statex

        \Procedure{\piscertifiedlink}{$b_{anchor}, b_{leader}$}
        \State $w \gets$ \Call{\pwavenumber}{$b_{leader}.round$}
        \State $B_{decision} \gets$ \Call{\pgetdecisionblocks}{$w$}
        \State \Return $|\{b \in B_{decision} : \Call{\piscert}{b, b_{leader}} \land \Call{\pislink}{b, b_{anchor}}\}| \geq \plinksize$
        \EndProcedure
        \Statex

        \Procedure{\ptrydirectdecide}{$w$}
        \If{\Call{\pskippedleader}{$w$}} \Return $\texttt{Skip}(w)$
        \EndIf
        \For{$b_{leader} \in$ \Call{\pleaderblock}{$w$}} \Comment{Loop over equivocations}
        \If{\Call{\psupportedleader}{$w, b_{leader}$}} \Return $\texttt{Commit}(b_{leader})$
        \EndIf
        \EndFor
        \State \Return $\pundecided$
        \EndProcedure
        \Statex

        \Procedure{\ptryindirectdecide}{$w, S$}
        \State $r_{certify} \gets$ \Call{\pcertifyround}{$w$}
        \State $s_{anchor} \gets$ first $s \in S$ s.t. $r_{certify} < s.round \land s \neq \texttt{Skip}$
        \If{$s_{anchor} = \texttt{Commit}(b_{anchor})$}
        \If{$\exists b_{leader} \in \Call{\pleaderblock}{w} : \Call{\piscertifiedlink}{b_{anchor}, b_{leader}}$}
        \State \Return $\texttt{Commit}(b_{leader})$
        \Else\ \Return $\texttt{Skip}(w)$
        \EndIf
        \EndIf
        \State \Return $\pundecided$ \Comment{The anchor is undecided or not found}
        \EndProcedure
    \end{algorithmic}
\end{algorithm}

\section{Full Proofs}
\label{app:proofs}

\ifextended This appendix proves every statement of \Cref{sec:proofs}, in the same order and under the same names; each statement is restated before its proof.\else This appendix states the supporting results behind the theorems of \Cref{sec:proofs} and proves every statement in order; the theorems are restated before their proofs.\fi{} Each proof names the assumption of \Cref{sec:overview} at the step that consumes it. The Lean development (\Cref{app:lean}) machine-checks every statement; the proofs here are the human-readable arguments, and they follow the machine-checked ones step for step.

\ifextended\else
\para{Notation}
A \emph{configuration} $C$ is a pair consisting of a start round and a leader count $\ell(C)$ (\Cref{sec:integration}). Every validator starts in $C_0 = (0, 1)$ and \emph{closes} $C_k$ when it commits the pivot of $C_k$, the first committed slot whose round exceeds the start round of $C_k$ by more than \pinterval; the pivot's round is the start round of $C_{k+1}$. Statements compare validators on the configurations that \emph{both} have reached. After Global Stabilization Time (GST), messages between honest validators arrive within a known bound $\Delta$.
\fi

\begin{appstatement}{fact:dag}{fact}{DAG facts}
    Let $b$ be a block held by an honest validator. (i)~The validator's DAG contains the entire causal history of $b$. (ii)~Every honest validator eventually holds $b$. (iii)~Any two honest validators that hold $b$ hold the same causal history of $b$.
\end{appstatement}

\begin{proof}
    (i) A2 states that a validator adds a block to its DAG only once the block's entire causal history is available; by induction over the order of additions, every block in the DAG has its causal history in the DAG.
    (ii) The broadcast layer guarantees that a block added by one honest validator is eventually received by every honest validator, which then adds it, its causal history being available by the same guarantee applied recursively.
    (iii) A block names its references as part of its content, so the causal history of $b$ --- the set of blocks reachable from $b$ by following references --- is determined by $b$ alone; by (i), each of the two validators holds exactly this set. \qed
\end{proof}

\begin{appstatement}{lemma:window}{lemma}{Window Agreement}
    Any two honest validators that commit the same pivot compute the same window.
\end{appstatement}

\begin{proof}
    \pgetsubdag (\Cref{alg:dual_mode}) defines the window as the restriction of the pivot's causal history to the \pinterval rounds up to the pivot's round; the round of a block is part of its content. A validator that commits the pivot holds its entire causal history, by \Cref{fact:dag}(i), so the window is fully contained in its DAG; by \Cref{fact:dag}(iii), the two validators hold the same causal history; and \pgetsubdag is deterministic, so the two restrictions are the same window. \qed
\end{proof}

\begin{appstatement}{prop:agreement}{proposition}{Leader-Count Agreement}
    For any deterministic update rule --- the AIMD rule of \Cref{alg:dual_mode} is one instance --- any two honest validators agree on every configuration both have reached, its start round and its leader count, and on the verdict of every slot of every configuration both have closed.
\end{appstatement}

\begin{proof}
    By induction over $k$. For the base case, both validators initialize $C_0 = (0, 1)$ with back-off $0$ (\Cref{alg:main,alg:dual_mode}). For the inductive step, assume both validators have reached $C_k$ and agree on its start round, leader count, and back-off; we show that they agree on every verdict of $C_k$'s rounds, and, if both close $C_k$, on the pivot and on $C_{k+1}$.

    \emph{Verdicts agree.} By the update semantics (\Cref{sec:integration}), every slot of the rounds from $C_k$'s start round up to and including its pivot's round is decided against the fixed schedule of $C_k$ --- $\ell(C_k)$ slots per round, slot $(r, l)$ led by $\pgetleader(r + l)$. This includes the derivations that the indirect decision rule performs: an indirect decision for a slot of those rounds reads decision anchors at higher rounds, and on the invocation that produces the verdict, those higher slots are evaluated under the same count, since \ptrycommit switches the count only after returning (\Cref{alg:main}). Every verdict of $C_k$ at either validator is therefore a verdict of one fixed schedule, and the safety half of A4 makes the two validators' verdicts equal wherever both have decided.

    \emph{Pivots agree.} Suppose both validators close $C_k$, with pivots $a_1$ and $a_2$, each the least committed slot whose round exceeds $C_k$'s start round by more than \pinterval. If $a_1 \neq a_2$, assume without loss of generality $a_1 < a_2$ in slot order. Slot $a_1$ is committed at the first validator, so by verdict agreement its verdict at the second validator is the same commit; but $a_1$ lies past the threshold and below $a_2$, contradicting the minimality of $a_2$. Hence $a_1 = a_2$, and by verdict agreement both validators commit the same pivot block.

    \emph{The next configuration agrees.} $C_{k+1}$'s start round is the pivot's round, agreed. Its leader count and back-off are the output of the update rule, a deterministic function of the agreed leader count and back-off of $C_k$, the agreed pivot block, and its window, agreed by \Cref{lemma:window}. Hence $C_{k+1}$ is equal at both validators.

    A validator that has not closed $C_k$ agrees, by the induction so far, on the strict prefix of configurations it has reached, which is all the statement claims. No step above uses synchrony, and no base-protocol fact beyond the safety half of A4 is consumed. \qed
\end{proof}

\begin{restatement}{thm:safety}{Safety}
    \sysname composed with any base protocol satisfying A1--A4 satisfies the safety of \Cref{def:bab}: the committed sequences of any two honest validators agree as far as both reach and grow only by appending (Agreement, Total Order), and no block is committed twice (Integrity).
\end{restatement}

\begin{proof}
    A validator's committed sequence up to its $K$-th closed configuration is the concatenation, in configuration order, of the per-configuration ledgers: for each $C_k$, the committed blocks of $C_k$'s rounds in slot order, as read from the verdicts. By \Cref{prop:agreement}, two validators agree on every configuration and every verdict of the configurations both have closed, so their per-configuration ledgers are equal up to the lower of the two heights, and the concatenations agree as far as both reach. A validator's own sequence at a lower height is by construction a prefix of its sequence at a higher one: later configurations only append. This gives Agreement and Total Order.

    For Integrity, suppose a block $L$ appears twice. A commit verdict names a block of its slot's round and leader (the decision rules only ever commit a leader block of the slot under decision), so both occurrences of $L$ lie at slots with $L$'s round and $L$'s author as leader. Within one configuration, the leaders of one round are distinct validators --- the rotation assigns $\pgetleader(r), \dots, \pgetleader(r + \ell - 1)$, which are distinct because $\pmaxproposersperround \leq n$ --- so the two slots are one slot, and a slot carries one verdict. Across configurations, the rounds of distinct configurations are disjoint, and $L$'s round lies in exactly one of them. \qed
\end{proof}

\begin{appstatement}{remark:conservativity}{remark}{Conservativity}
    With the constant update rule, \sysname is the base protocol: the same schedule, the same verdicts, the same ledger. The AIMD rule additionally keeps the leader count within $[1, \pmaxproposersperround]$, and its threshold test is one integer comparison.
\end{appstatement}

\begin{proof}
    Under the constant rule, every update returns the leader count and back-off it was given, so by induction from $C_0 = (0, 1)$ every configuration has one leader and back-off $0$. With one leader per round, slot numbers coincide with round numbers and the schedule is the base protocol's own; every verdict is then a derivation of the base protocol's decision rules on the unmodified DAG, and the committed sequence is the base protocol's ledger. The bounds of the AIMD rule are read off \Cref{alg:dual_mode}: the increase is capped at \pmaxproposersperround and the decrease is floored at $1$, and the threshold test multiplies both sides into integers. \qed
\end{proof}

\begin{appstatement}{lemma:progress}{lemma}{Configuration Progress}
    After GST, if every honest validator has closed configuration $C_k$, then some honest validator closes $C_{k+1}$ within bounded time, and every honest validator does within a further $\Delta$.
\end{appstatement}

\begin{proof}
    Closing $C_{k+1}$ means committing its pivot: the least committed slot whose round exceeds $C_{k+1}$'s start round by more than \pinterval. Fix an honest validator $v$. After GST, the liveness half of A4, applied to the fixed schedule of $C_{k+1}$, guarantees two things within bounded time: every slot of $C_{k+1}$'s rounds up to any fixed horizon is decided (with the wave-plus-commit-gap margin below the horizon), and some slot at a round within the commit gap $c$ of the threshold round is committed. Let $\kappa_0$ be such a committed slot; the pivot exists and lies at or below $\kappa_0$ in slot order.

    Two identifications, and this is where the lemma consumes base \emph{safety}: the commit that liveness supplies is a verdict on the network's joint view, while the pivot that the update reads is $v$'s own verdict; the safety half of A4 makes them the same verdict. Likewise, every slot between $C_{k+1}$'s start and $\kappa_0$ is decided, and \ptrycommit's walk (\Cref{alg:main}), which stops at the first undecided slot, therefore reaches the pivot and commits it. Validator $v$ then invokes the update rule and closes $C_{k+1}$.

    Finally, within a further $\Delta$, every honest validator holds the pivot and its causal history (\Cref{fact:dag}(ii), with delivery bounded by $\Delta$ after GST), decides the same slots by the safety half of A4, and closes $C_{k+1}$ by determinism of the update rule (\Cref{prop:agreement}). \qed
\end{proof}

\begin{restatement}{thm:liveness}{Liveness}
    After GST, every honest validator's run reaches every height: for every $k$ it closes configuration $C_k$, and committed slots keep appearing.
\end{restatement}

\begin{proof}
    By induction on $k$: height $0$ holds initially, and \Cref{lemma:progress} extends every height by one within bounded time after GST. Each closing commits at least the pivot, so committed slots keep appearing. Each pivot's round exceeds the previous start round by more than \pinterval, so configuration start rounds strictly increase even though the leader count oscillates. The per-configuration clause of A4-liveness that \Cref{lemma:progress} invokes carries a margin: a slot is guaranteed decided only when a commit gap and a wave still fit below the horizon of the DAG. The margin is necessary, not an artifact of the proof: a slot whose direct decision fails --- a Byzantine leader can always cause this --- is decided only through a committed slot at least a full wave above it, whose own wave must also fit under the horizon; no protocol decides such a slot at the last rounds under the horizon. \qed
\end{proof}

\begin{appstatement}{remark:rotation}{remark}{Liveness of the rotation}
    For the leader schedule used in this paper, the per-configuration liveness clause of A4 is proved rather than assumed, for all four base protocols at every leader count (\Cref{app:lean}), with a commit gap of $n + \pwavelength - 1$ rounds.
\end{appstatement}

\begin{proof}
    Each of the four base protocols supplies two descent facts after GST, from its own theorems: a slot led by a validator of its reliable set is decided as commit by the direct rule within its wave, whatever the schedule; and a slot is decided as soon as some committed slot lies at least a full wave above it with no eligible slot in between (the indirect decision rule). The reliable set misses at most $f$ validators for Mysticeti and Blue Bottle, at most $f + c$ for Orcaella, whose mixed bound enters only here, and at most $n - \lceil (n+1)/2 \rceil$ for Nemo-Nemo, whose crash bound enters only here, through the reliable set being a majority.

    The classic liveness argument asks for a wave of consecutive honest-led \emph{slots}, and under the rotation $\pgetleader(r + l)$ it has no instance once several leaders share a round: at $n = 4$ and two leaders per round, three consecutive rounds already name every validator. The proof instead descends through round \emph{heads}. Under the schedule with $m$ leaders, the head of round $\rho$ --- its first slot --- is slot $m\rho$, and it is led by $\pgetleader(\rho)$ \emph{whatever $m$ is}; this is what lets one argument serve every leader count. If the heads of \pwavelength consecutive rounds are all led by reliable validators, the last of them is decided as commit by the direct rule, and every slot within a wave below the first is decided: the committed head lies exactly a wave above it, and no slot strictly between is a full wave above it, so the indirect decision rule applies with nothing in between.

    It remains to find such heads: a pigeonhole on the rotation. Round-robin visits the $n$ validators cyclically, so the heads of \pwavelength consecutive rounds are \pwavelength consecutive validators in the rotation order. If every window of \pwavelength consecutive rounds in a cycle contained a head led outside the reliable set, then mapping each of the $n$ window starts to such a head would inject the $n$ starts into the \pwavelength positions times the at most $s$ excluded validators, giving $n \leq \pwavelength \cdot s$; the committee bounds ($n \geq 3f + 1$ with $\pwavelength = 3$ for Mysticeti, $n \geq 5f + 1$ with $\pwavelength = 2$ for Blue Bottle, $n \geq 5f + 3c + 1$ with $\pwavelength = 2$ and slack $f + c$ for Orcaella, and a majority with $\pwavelength = 2$ for Nemo-Nemo) exclude this. Hence within any $n + \pwavelength - 1$ consecutive rounds some \pwavelength consecutive heads are reliable-led, which bounds the commit gap by $n + \pwavelength - 1$. \qed
\end{proof}

\section{Lean Formalization}
\label{app:lean}

Every statement of \ifextended\Cref{sec:proofs}\else\Cref{sec:proofs,app:proofs}\fi{} is machine-checked in Lean~4. The theorems are stated once over an abstract base protocol and an arbitrary deterministic update rule, then instantiated at Mysticeti, Blue Bottle, Nemo-Nemo, and Orcaella. The formalization contains no \texttt{sorry} and relies on no axioms beyond Lean's three standard ones (propositional extensionality, quotient soundness, and choice). The development spans ${\sim}3{,}900$ lines, partitioned so a human reader audits only definitions and theorem statements (${\sim}1{,}900$ lines). The proofs (${\sim}2{,}000$ lines) are AI-generated, checked by the kernel and need no reading. A further ${\sim}3{,}200$ lines of executable witnesses evaluate every definition on concrete DAGs before any theorem uses it, guarding against vacuous statements. These witnesses include a run whose leader count rises at a pivot, two validators' runs identified by the agreement theorem, a DAG on which the liveness theorems produce verdicts end to end, and a mixed-fault DAG that one admissible indirect threshold commits and another skips.

\begin{table}[t]
    \caption{Paper statements (\ifextended\Cref{sec:proofs}\else\Cref{sec:proofs,app:proofs}\fi) and their machine-checked counterparts. Each Lean result is a \texttt{Statement} file (audited) with a separate kernel-checked proof.}
    \label{tab:lean}
    \centering
    \scriptsize
    \begin{tabular}{@{}lp{1.9cm}p{5.8cm}@{}}
        \hline
        Paper statement                  & Lean result                                                               & Content                                                                                                                                                             \\
        \hline
        Lemma (Window Agreement)         & \texttt{Window}                                                           & A view holding the pivot holds its whole causal history; two views compute one window.                                                                              \\
        Prop.\ (Leader-Count Agreement)  & \texttt{Agreement}                                                        & Two runs, from any views to any heights, agree on every configuration (start round, count, back-off) and every verdict of their common ranges; for any update rule. \\
        Theorem (Safety)                 & \texttt{Ledger}                                                           & The committed sequence is agreed as far as both runs reach, grows by prefixes, and holds no block twice.                                                            \\
        Remark (Conservativity)          & \texttt{Conservativity}                                                   & Under the constant rule every verdict is a verdict of the base protocol at one leader.                                                                              \\
        AIMD rule (\Cref{alg:dual_mode}) & \texttt{Aimd}                                                             & The count stays in $[1, \pmaxproposersperround]$; the healthy and unhealthy steps; the integer threshold test.                                                      \\
        Lemma (Configuration Progress)   & \texttt{Progress}                                                         & A run past the synchrony round extends by one configuration; consumes the agreement law.                                                                            \\
        Theorem (Liveness)               & \texttt{Progress}                                                         & Runs of every height exist under the horizon that height needs.                                                                                                     \\
        Remark (Rotation liveness)       & \texttt{Heads}                                                            & Round-robin is live at every leader count, from the descent laws and a pigeonhole, with gap $n + \pwavelength - 1$.                                                 \\
        Base protocols (A1--A4)          & \texttt{Mysticeti}, \texttt{Odontoceti}, \texttt{Nemo}, \texttt{Orcaella} & Each protocol satisfies the laws and the descent laws; Blue Bottle appears as Odontoceti, Nemo-Nemo's safety needs no fault bound.                                  \\
        \hline
    \end{tabular}
\end{table}

\subsection{The Base-Protocol Interface}
The base protocol is formalized as a structure providing a universe of blocks with views, a direct decision predicate reading one wave, and a decision relation parametric in the leader schedule. Its laws are the assumptions A1--A4 of \Cref{sec:overview} made precise --- A1 is data rather than a law, the laws capture A2, A3, and the safety half of A4, and the liveness half is the clause stated further down --- and they are exactly what \Cref{sec:proofs} consumes. The fault model lives only in each protocol's instantiation, never in the interface.

A base protocol exposes views $V$ of a universe of blocks, each carrying a round, an author, and references (A1); a predicate $\mathsf{DirectCommit}_V(L, r)$; and a relation $\mathsf{Decided}_S(V, s, v)$, for $S$ a leader schedule assigning a leader to every slot (A1), $s$ a slot, and $v$ a verdict (a block or skip), subject to:
\begin{itemize}
    \item \emph{Causal completeness (A2).} A view contains every block that its blocks reference.
    \item \emph{Agreement (A4, safety).} For a fixed schedule $S$: $\mathsf{Decided}_S(V_1, s, v_1)$ and $\mathsf{Decided}_S(V_2, s, v_2)$ imply $v_1 = v_2$.
    \item \emph{Direct decisions are decisions (A3).} If $L$ is a candidate of slot $s$ (right round, right author) and $\mathsf{DirectCommit}_V(L, \mathrm{round}(s))$, then $\mathsf{Decided}_S(V, s, L)$.
    \item \emph{Decisions are about candidates.} $\mathsf{Decided}_S(V, s, L)$ implies $L$ is a candidate of $s$.
\end{itemize}

\subsection{Runs and Liveness}
The \sysname mechanism relies on a dynamic schedule. The schedule of a configuration with $m$ leaders assigns slot $(r, l)$ to $\pgetleader(r + l)$. The leader of a slot does not depend on $m$; only which slots exist does. The window is the pivot's causal history. The observed count is over slots: a slot counts when it has a candidate decided as commit by the direct rule within the window. Two such candidates of one slot are one block, by agreement. The threshold is the integer pair $(96, 100)$ and the test an integer comparison. An update rule is any function mapping $(\text{count}, \text{backoff}, \text{DAG}, \text{pivot})$ to $(\text{count}', \text{backoff}')$. The AIMD rule of \Cref{alg:dual_mode} and the constant rule are two instances.

Runs are formalized as prefixes. A run closed to height $K$ carries configurations $0, \dots, K$ defining a start round, leader count, and back-off. Each configuration is closed by a pivot, which is the least committed slot past the interval. Every verdict of a configuration's rounds is derived against that configuration's fixed schedule, formalizing the semantics of \Cref{sec:integration}. The next configuration is the update rule applied to the pivot. A finite DAG closes finitely many configurations, so there is no total run. Every theorem compares runs of any two heights on the configurations both reach, matching the prefix formulation of \ifextended\Cref{sec:proofs}\else\Cref{app:proofs}\fi.

Liveness avoids physical clocks by treating synchrony structurally. Each protocol defines when a DAG is good from a round to a horizon $N$, meaning a reliable quorum is synchronized and populating the rounds. Liveness is a clause on a schedule $S$ with commit gap $c$, captured by the following property:

\begin{quote}
    \emph{On a good DAG, (a) every slot lying at least $c$ rounds plus a wave under the horizon is decided, and (b) within any $c$ consecutive rounds in that region some slot is committed.}
\end{quote}

The margin under the horizon is the one \ifextended\Cref{sec:proofs}\else\Cref{app:proofs}\fi{} shows necessary.

\subsection{Results}
\Cref{tab:lean} maps each statement of \ifextended\Cref{sec:proofs}\else\Cref{sec:proofs,app:proofs}\fi{} to its machine-checked counterpart. Mysticeti satisfies the interface laws and the descent laws at slack $f$ with commit gap $n + 2$. Blue Bottle satisfies them at slack $f$ with gap $n + 1$. Nemo-Nemo's safety consumes no fault bound at all. Its liveness uses the crash bound only for a synchronized majority of live validators to exist, with gap $n + 1$. Orcaella satisfies the laws at slack $f + c$ with gap $n + 1$; its mixed bound enters only through the reliable set.

\ifextended\else
    \section{Implementation Details}\label{app:impl}
This appendix details the integration of the \sysname scheduler into the base codebase (\Cref{sec:implementation}) and the simulation layer used in \Cref{sec:evaluation}.

\para{Scheduler integration}
The scheduler module implements \Cref{alg:dual_mode}: it extracts the window, counts the slots decided as commit by the direct rule within the window using the base protocol's own wave length and commit threshold, and applies the AIMD update. In the commit path (\Cref{alg:main}), the committer truncates the decision sequence at the pivot, invokes the scheduler, and rebuilds its decider instances for the new leader count. The protocol configuration gains three parameters: an adaptive scheduling toggle, the maximum leader count, and the interval. This code runs unchanged across all protocol variants, as the scheduler reads the wave length and thresholds from the protocol description and never inspects the fault model.

\para{Simulation layer}
We inherit the simulation layer from the base codebase: a discrete-event controller emulates the \texttt{tokio} runtime and the TCP links behind the networking interface, so every other component, consensus logic included, runs unmodified. We extend it (${\sim}600$ LOC) to model time-varying, heterogeneous networks, where every link has an independent one-way latency range in each direction. A run is scripted as a sequence of phases. Each phase can designate specific validators as slow, degrading every link that touches them, or optionally only their outbound links to a chosen subset of validators, beyond the leader timeout. In the configuration of \Cref{sec:evaluation} a slow validator's blocks reach every other validator late, so its leader slots miss the direct rule after the leader timeout; degrading only part of its outbound links instead reproduces the partial-quorum situation of \Cref{fig:hol}. The simulator records, per validator and sampling window, the committed latency, the leader count in force, and the leader-slot decisions by type, to support the plots and numbers of \Cref{sec:evaluation}.

    \section{Experimental Setup Details}\label{app:setup}
This appendix gives the full experimental setup summarized in \Cref{sec:evaluation}.

\para{Simulator and scenario}
We rely on simulation because degradations must be identical across base protocols and leader configurations, switch on and off at scripted times, and remain reproducible. None of this is possible in a cloud deployment, where slowness cannot be injected on demand and never repeats identically. We simulate a committee of $n{=}10$ validators in a full mesh where the one-way link latency is uniform in $[50, 100]$\,ms. The execution consists of three phases: a healthy phase of $100$\,s, a degraded phase of $400$\,s, and a healthy recovery phase of $100$\,s. During the degraded phase, a set of \emph{slow} validators has every link that touches it, in both directions, raised to a latency uniform in $[1000, 1300]$\,ms, beyond the $1$\,s leader timeout; all other links are unchanged. The slow set is as large as the base protocol tolerates while the remaining validators still form a quorum: three validators for Mysticeti ($f{=}3$)\ifhydrozoan, Nemo-Nemo ($c{=}3$), and Hydrozoan\else{} and Nemo-Nemo ($c{=}3$)\fi, one for Blue Bottle, whose $4n/5{+}1$ quorum tolerates a single fault at $n{=}10$, and two for Orcaella ($f{=}1$, $c{=}1$), whose $n{-}f{-}c$ quorum of eight tolerates exactly two faults at $n{=}10$.  A slow validator falls rounds behind, so when it is a leader its block reaches nobody in time: every other validator waits out the leader timeout before proposing without it, and the slot is then skipped: directly for Mysticeti, Blue Bottle, and Orcaella (round-level blocking), or through the indirect decision rule for Nemo-Nemo\ifhydrozoan{} and Hydrozoan\fi, whose direct-skip quorum is unanimity, in which case the slot holds every later slot until an anchor resolves it (commit-level blocking, \Cref{fig:hol}). Either way the slot is not decided as commit by the direct rule and lowers the direct decision rate, and with more leader slots per round, more rounds contain a slow leader: three rounds in ten with one leader, seven in ten with five. The \sysname scheduler operates with $\pinterval = 25$ rounds (about $2$\,s while healthy and $10$--$20$\,s while degraded, since rounds slow down), $\pmaxproposersperround = 5$, and $\pthreshold = 0.96$.

\para{Configurations and metrics}
We compare three configurations of each base protocol: \sysname, with the scheduler enabled, and two static baselines that never change their leader count, \fixed{1} with a single leader per round and \fixed{5} with five, the two extremes of the range the scheduler may choose from. \sysname runs start with a leader count of five. Our primary metric is end-to-end latency from client submission to commit, measured per validator over sampling windows of $3$\,s while healthy and $5$\,s while degraded. We report the mean over validators, from one run per configuration with a common seed so that the three configurations of a protocol see the same network draws. The per-phase means are summarized in \Cref{tab:eval}; the ``steady'' degraded window is defined as $[200, 500]$\,s, after every \sysname run has reached one leader.

\fi

\end{document}